\documentclass[journal]{IEEEtran}

\usepackage{amsmath,amssymb,amsfonts,mathtools}
\usepackage{algorithm}
\usepackage{algpseudocode}
\usepackage{array}
\usepackage{textcomp}
\usepackage{stfloats}
\usepackage{url}
\usepackage{verbatim}
\usepackage{graphicx}
\usepackage{subcaption}
\usepackage{booktabs}
\usepackage{siunitx}
\usepackage{multirow}
\usepackage{tikz}
\usetikzlibrary{positioning, arrows.meta, calc, fit, backgrounds}
\usepackage{pgfplots}
\pgfplotsset{compat=1.18}
\usepackage{amsthm}
\usepackage{bm}
\theoremstyle{definition}
\newtheorem{assumption}{Assumption}

\theoremstyle{plain}
\newtheorem{theorem}{Theorem}
\newtheorem{lemma}{Lemma}
\newtheorem{corollary}{Corollary}

\newcommand{\bA}{\mathbf{A}}

\usepackage{bm}

\newcommand{\RR}{\mathbb{R}}
\newcommand{\DS}{\mathcal{DS}}
\newcommand{\bH}{\mathbf{H}}
\newcommand{\bZ}{\mathbf{Z}}
\newcommand{\bI}{\mathbf{I}}
\newcommand{\bone}{\mathbf{1}}
\newcommand{\vecop}{\mathrm{vec}}

\usepackage[style=ieee]{biblatex}
\begin{document}

\title{Resource-efficient Semantic Coding Schemes with Manifold-constrained Hyper-connections}

\author{Jingwen Fu,~\IEEEmembership{Student Member,~IEEE,}
        Ming Xiao,~\IEEEmembership{Senior Member,~IEEE}%
\thanks{Jingwen Fu and Ming Xiao are with the School of Electrical Engineering and Computer Science (EECS), KTH Royal Institute of Technology, 11428 Stockholm, Sweden. (Corresponding author: Ming Xiao.) Email: \{jingwenf, mingx\}@kth.se.}%
}

\maketitle

\begin{abstract}
Semantic communication (SemCom) and task-oriented communication (TOC) can reduce wireless resource consumption by focusing on transmitting semantic or task-relevant information instead of raw messages. In practice, a main challenge is to make transmitting information robust to channel noise and fading while keeping it compact. Existing learning-based transceivers often improve reliability by using larger encoders or higher-dimensional channel features, which increase computation complexity and channel uses. Therefore, optimized system design needs explicit rate control to balance performance and transmitting resources e.g., bandwidth and power.
For the purpose, we propose a manifold-constrained hyper-connection (mHC) coding scheme with an entropy bottleneck (EB) for resource-efficient SemCom and TOC over wireless channels. Instead of using a single residual path of existing encoders, the proposed mHC-based semantic encoder applies multiple residual streams and constrains their interaction by doubly stochastic (DS) mixing matrices. The new structure improves representation diversity and training stability with negligible parameter and floating-point overhead. The EB quantizes the channel features and estimates the entropy-coded rate, enabling end-to-end rate--distortion/task optimization under bandwidth and transmit-power constraints.
We further show that DS-constrained stream mixing does not increase the differential entropy of the transmitted features. This implies no increase in the ideal EB coding length. Experiments on SemCom and TOC under additive white Gaussian noise (AWGN), Rayleigh fading, Rician fading, and imperfect channel state information (CSI) show that the proposed scheme improves semantic/task performance, communication robustness, and convergence stability over residual and unconstrained HC baselines, while requiring no additional channel uses.
\end{abstract}

\begin{IEEEkeywords}
Semantic communication, task-oriented communication, wireless channel robustness, hyper-connections, doubly stochastic matrix, entropy bottleneck, channel state information
\end{IEEEkeywords}

\IEEEpeerreviewmaketitle

\section{Introduction}
\label{sec:intro}

In their seminal work, Shannon and Weaver defined three levels of communication systems: the \emph{technical} level concerning the accurate transmission of symbols, the \emph{semantic} level addressing the precision of conveyed meaning, and the \emph{effectiveness} level evaluating how the received meaning influences actions and behaviors \cite{ShannonWeaver1949}.
Conventional wireless systems have mainly addressed the technical level by optimizing metrics such as bit error rates and channel capacity, while ignoring the meaning carried by the transmitted bits.
However, in machine-to-machine communications and intelligent edge applications, the receiver often uses the received data for inference or decision-making instead of reconstructing the source messages.
This paradigm shift motivates \emph{semantic communication} (SemCom) and \emph{task-oriented communication} (TOC), where the goal is to transmit meaning or task-relevant information \cite{gunduz2023beyond_bits,qin2022semcom_principles}.

A central challenge in SemCom or TOC is to achieve high semantic representation or task completing performance under limited wireless resources~\cite{11075895}.
High performance requires diverse and robust semantic representations, especially when the channel is noisy or fading.
At the same time, practical wireless systems are constrained by bandwidth, transmit power, latency, and device computation.
Therefore, a resource-efficient SemCom/TOC system must address two coupled requirements:
(i) \emph{computational efficiency}, i.e., improving the semantic encoder without a large increase in training or inference cost, and
(ii) \emph{transmission efficiency}, i.e., reducing transmitted channel symbols while preserving the information needed for reconstruction or task inference.

Recent works have explored SemCom and TOC from different perspectives.
DeepSC shows that Transformer-based semantic transceivers can learn text transmission over noisy channels \cite{xie2021deepsc}.
Information Bottleneck (IB)-inspired TOC frameworks further introduce rate-task tradeoffs for edge inference \cite{shao2022ib_edge_inference,fu2025robust}.
These studies demonstrate the potential of end-to-end learning for SemCom and TOC.
However, several limitations remain.
First, improving performance often requires larger models or higher-dimensional channel features, which increase both computation and channel usage.
Second, many learned transceivers use fixed-dimensional continuous features or rate surrogates that do not directly correspond to an entropy-coded length.
Third, most existing semantic encoders are built on standard residual connections, which provide only a single feature pathway and may limit representation diversity in deep networks~\cite{qin2022semcom_principles}.
These limitations motivate a joint design of the semantic encoder and the rate-control mechanism.

To improve the semantic encoder, we incorporate Hyper-Connections (HC) proposed in \cite{zhu2024hyperconnections} into SemCom and TOC.
Different from a standard residual connection with one feature pathway, HC maintains multiple parallel streams and allows them to exchange information across layers.
The multiple streams provide more diverse semantic features, but unconstrained stream mixing may cause stream imbalance and unstable training.
To address the issue, manifold-constrained Hyper-Connections (mHC) are proposed by imposing a doubly stochastic (DS) constraint on the mixing matrices, which balances the contribution of different streams and improves training stability \cite{xie2025mhc}.
Therefore, mHC is suitable for improving semantic representation learning with minor extra costs.



In addition to encoder design, explicit rate control is essential for wireless SemCom and TOC.
The system should minimize the transmission rate while maintaining acceptable semantic reconstruction quality or task accuracy, which leads to a rate-distortion or rate-task optimization problem.
For this purpose, we incorporate an \emph{entropy bottleneck} (EB) that learns compressible channel features for entropy coding \cite{balle2018hyperprior}.
The EB provides a differentiable estimate of the coding rate through a learned entropy model, enabling end-to-end rate-distortion/task optimization.
Since mHC changes the feature transformation and stream mixing process in the semantic encoder, we also analyze its impact on coding efficiency.

\begin{figure*}[t]
\centering
\resizebox{0.7\textwidth}{!}{%
\begin{tikzpicture}[
    >=stealth,
    font=\footnotesize,
    fblock/.style={draw, thick, rounded corners=2pt, minimum height=0.7cm,
                   minimum width=1.6cm, align=center, fill=cyan!10},
    hblock/.style={draw, thick, rounded corners=2pt, minimum height=0.5cm,
                   minimum width=1.4cm, align=center, fill=blue!8},
    dsblock/.style={draw, thick, rounded corners=2pt, minimum height=0.6cm,
                    minimum width=1.4cm, align=center, fill=red!10,
                    draw=red!50!black},
    arr/.style={->, thick, >=stealth},
    anno/.style={font=\scriptsize, text=black!55},
    ptitle/.style={font=\normalsize\bfseries, text=black},
]

\def\yt{4.6}       
\def\yinput{4.1}   
\def\ydim{3.85}    
\def\ysplit{3.2}   
\def\yhpre{2.55}   
\def\yF{1.55}      
\def\yhpost{0.55}  
\def\yjoin{-0.1}   
\def\yout{-0.65}   
\def\yodim{-0.9}   
\def\yeq{-1.5}     

\def\mainoff{1.5}  
\def\skipoff{1.5}  

\def\xa{0}

\node[ptitle] at (\xa, \yt) {(a) Residual};

\node at (\xa, \yinput) {$\mathbf{x}_l$};
\node[anno] at (\xa, \ydim) {$\in \mathbb{R}^{d}$};

\draw[arr] (\xa, {\ydim - 0.15}) -- (\xa, \ysplit);

\node[fblock] (aF) at ({\xa+\mainoff}, \yF)
    {$\mathcal{F}(\cdot;\mathbf{W}_l)$};

\node[draw, circle, inner sep=2pt, thick, fill=white, font=\footnotesize]
    (asum) at (\xa, \yjoin) {$+$};

\draw[thick] (\xa, \ysplit) -- ({\xa+\mainoff}, \ysplit);
\draw[arr] ({\xa+\mainoff}, \ysplit) -- (aF.north);
\draw[thick] (aF.south) -- ({\xa+\mainoff}, \yjoin);
\draw[arr] ({\xa+\mainoff}, \yjoin) -- (asum.east);

\draw[thick] (\xa, \ysplit) -- ({\xa-\skipoff}, \ysplit);
\draw[thick] ({\xa-\skipoff}, \ysplit) -- ({\xa-\skipoff}, \yjoin);
\draw[arr] ({\xa-\skipoff}, \yjoin) -- (asum.west);
\node[anno, rotate=90, anchor=south] at ({\xa-\skipoff-0.2}, {(\ysplit+\yjoin)/2})
    {Identity};

\fill[black] (\xa, \ysplit) circle (2pt);

\draw[arr] (asum.south) -- (\xa, {\yout + 0.15});
\node at (\xa, \yout) {$\mathbf{x}_{l+1}$};
\node[anno] at (\xa, \yodim) {$\in \mathbb{R}^{d}$};

\node[font=\scriptsize, text=black!70] at (\xa, \yeq)
    {$\mathbf{x}_{l+1}{=}\mathbf{x}_l{+}\mathcal{F}(\mathbf{x}_l;\mathbf{W}_l)$};

\def\xb{5.8}

\node[ptitle] at (\xb, \yt) {(b) HC};

\node at (\xb, \yinput) {$\mathbf{x}_l$};
\node[anno] at (\xb, \ydim) {$\in \mathbb{R}^{S \times d}$};

\draw[arr] (\xb, {\ydim - 0.15}) -- (\xb, \ysplit);

\node[hblock] (bHpre) at ({\xb+\mainoff}, \yhpre)
    {$\mathbf{H}^{\mathrm{pre}}_l$};
\node[fblock] (bF) at ({\xb+\mainoff}, \yF)
    {$\mathcal{F}(\cdot;\mathbf{W}_l)$};
\node[hblock] (bHpost) at ({\xb+\mainoff}, \yhpost)
    {$\mathbf{H}^{\mathrm{post}\!\top}_l$};

\node[hblock] (bHres) at ({\xb-\skipoff}, {(\yhpre+\yhpost)/2})
    {$\mathbf{H}^{\mathrm{res}}_l$};

\node[draw, circle, inner sep=2pt, thick, fill=white, font=\footnotesize]
    (bsum) at (\xb, \yjoin) {$+$};

\draw[thick] (\xb, \ysplit) -- ({\xb+\mainoff}, \ysplit);
\draw[arr] ({\xb+\mainoff}, \ysplit) -- (bHpre.north);
\draw[arr] (bHpre.south) -- (bF.north);
\draw[arr] (bF.south) -- (bHpost.north);
\draw[thick] (bHpost.south) -- ({\xb+\mainoff}, \yjoin);
\draw[arr] ({\xb+\mainoff}, \yjoin) -- (bsum.east);

\draw[thick] (\xb, \ysplit) -- ({\xb-\skipoff}, \ysplit);
\draw[arr] ({\xb-\skipoff}, \ysplit) -- (bHres.north);
\draw[thick] (bHres.south) -- ({\xb-\skipoff}, \yjoin);
\draw[arr] ({\xb-\skipoff}, \yjoin) -- (bsum.west);

\fill[black] (\xb, \ysplit) circle (2pt);

\draw[arr] (bsum.south) -- (\xb, {\yout + 0.15});
\node at (\xb, \yout) {$\mathbf{x}_{l+1}$};
\node[anno] at (\xb, \yodim) {$\in \mathbb{R}^{S \times d}$};

\node[font=\scriptsize, text=black!70] at (\xb, \yeq)
    {$\mathbf{x}_{l+1}{=}\mathbf{H}^{\mathrm{res}}_l\mathbf{x}_l
      {+}\mathbf{H}^{\mathrm{post}\!\top}_l
      \mathcal{F}(\mathbf{H}^{\mathrm{pre}}_l\mathbf{x}_l;\mathbf{W}_l)$};

\def\xc{11.6}

\node[ptitle] at (\xc, \yt)
    {(c) mHC};

\node at (\xc, \yinput) {$\mathbf{x}_l$};
\node[anno] at (\xc, \ydim) {$\in \mathbb{R}^{S \times d}$};

\draw[arr] (\xc, {\ydim - 0.15}) -- (\xc, \ysplit);

\node[dsblock] (cHpre) at ({\xc+\mainoff}, \yhpre)
    {$\mathbf{H}^{\mathrm{pre}}_l$\;{\tiny(SM)}};
\node[fblock] (cF) at ({\xc+\mainoff}, \yF)
    {$\mathcal{F}(\cdot;\mathbf{W}_l)$};
\node[dsblock] (cHpost) at ({\xc+\mainoff}, \yhpost)
    {$\mathbf{H}^{\mathrm{post}\!\top}_l$\;{\tiny(SM)}};

\node[dsblock] (cHres) at ({\xc-\skipoff}, {(\yhpre+\yhpost)/2})
    {$\mathbf{H}^{\mathrm{res}}_l$\;{\tiny(DS)}};

\node[draw, circle, inner sep=2pt, thick, fill=white, font=\footnotesize]
    (csum) at (\xc, \yjoin) {$+$};

\draw[thick] (\xc, \ysplit) -- ({\xc+\mainoff}, \ysplit);
\draw[arr] ({\xc+\mainoff}, \ysplit) -- (cHpre.north);
\draw[arr] (cHpre.south) -- (cF.north);
\draw[arr] (cF.south) -- (cHpost.north);
\draw[thick] (cHpost.south) -- ({\xc+\mainoff}, \yjoin);
\draw[arr] ({\xc+\mainoff}, \yjoin) -- (csum.east);

\draw[thick] (\xc, \ysplit) -- ({\xc-\skipoff}, \ysplit);
\draw[arr] ({\xc-\skipoff}, \ysplit) -- (cHres.north);
\draw[thick] (cHres.south) -- ({\xc-\skipoff}, \yjoin);
\draw[arr] ({\xc-\skipoff}, \yjoin) -- (csum.west);

\fill[black] (\xc, \ysplit) circle (2pt);

\draw[arr] (csum.south) -- (\xc, {\yout + 0.15});
\node at (\xc, \yout) {$\mathbf{x}_{l+1}$};
\node[anno] at (\xc, \yodim) {$\in \mathbb{R}^{S \times d}$};

\node[draw, rounded corners=2pt, fill=red!5, inner sep=3pt,
      font=\scriptsize, text=red!60!black, align=center, draw=red!30]
    at (\xc, \yeq)
    {$\mathbf{H}^{\mathrm{res}}_l$: DS\; $\mathbf{H}\mathbf{1}{=}\mathbf{1}$,\;
     $\mathbf{1}^{\!\top}\!\mathbf{H}{=}\mathbf{1}^{\!\top}$,\;
     $H_{ij}{\ge}0$};
 
\end{tikzpicture}
}
\caption{Three connection paradigms: (a)~standard residual connection, (b)~HC, and (c)~mHC. mHC constrains the residual mixing matrix $\mathbf{H}^{\mathrm{res}}_l$ to the DS manifold and normalizes $\mathbf{H}^{\mathrm{pre}}_l$, $\mathbf{H}^{\mathrm{post}}_l$ via softmax (SM) operation.}
\label{fig:three_arc}
\end{figure*}

Based on these insights, we propose a resource-efficient learning-based method termed the \textit{mHC coding scheme}, which can be instantiated as \textit{mHC-SemCom} and \textit{mHC-TOC} with different task heads and loss functions.
The proposed scheme combines mHC-based semantic encoding with EB-based rate control to improve semantic/task performance, transmission efficiency, and training stability over wireless channels.
The main contributions of this paper are summarized as follows:
\begin{enumerate}
\item To the best of our knowledge, we are the first to apply HC and mHC to SemCom/TOC transceivers for resource-efficient wireless communication.
      The DS-constrained multi-stream encoder improves representation diversity and training stability with limited additional overhead.

\item We integrate an EB-based entropy coder into the transmitter to provide an estimate of the coding rate.
      This enables explicit rate-distortion/task optimization for bandwidth- and power-limited wireless transmission.

\item We show that the proposed multi-stream encoder improves representation learning without increasing the communication load.
      In particular, DS-constrained mHC mixing does not increase the ideal entropy-coding length, and the proposed architecture increases only negligible computational, parameter, and transmission overhead.

\item We conduct extensive experiments under additive white Gaussian noise (AWGN), Rayleigh fading, and Rician fading channels, including settings with imperfect channel state information (CSI).
      The results show that the proposed methods outperform strong baselines in both SemCom and TOC, while improving communication robustness, resource efficiency, and training stability.
\end{enumerate}

The remainder of the paper is organized as follows. Section~\ref{sec:related} reviews related works. Section~\ref{sec:system_model} presents the system model and problem formulation. Section~\ref{sec:proposed} describes the proposed \textit{mHC coding scheme} and analyzes its resource efficiency. Section~\ref{sec:theory} provides theoretical analysis. Section~\ref{sec:experiments} reports numerical results, and Section~\ref{sec:conclusion} concludes the paper.

\textit{Notation:}
Throughout this paper, bold lowercase letters denote vectors and bold uppercase letters denote matrices.
$(\cdot)^\top$ denotes the transpose $\mathbf{I}$ means the identity matrix, and $\mathbf{1}$ denotes the all-one vector.
$\|\cdot\|_p$ denotes the $\ell_p$ norm (or spectral norm for matrices when $p=2$).
$\mathbb{R}$ denotes the set of real numbers, and $\mathbb{E}[\cdot]$ denotes expectation.


\section{Related Works}
\label{sec:related}

\subsection{Resource-Efficient SemCom and TOC}
SemCom and TOC have emerged as promising communication paradigms beyond conventional bit-level reliable transmission by focusing on meaning preservation and task completing.
Deep semantic communications (DeepSC) leverage deep neural architectures (e.g., Transformer-based semantic transceivers) to optimize semantic-level objectives over noisy channels \cite{xie2021deepsc}.
Meanwhile, end-to-end learned joint source-channel coding has demonstrated strong robustness for wireless image transmission, where deep joint source-channel coding (DeepJSCC) maps source samples to channel symbols without explicit separation of source and channel codes \cite{bourtsoulatze2019deepjscc}.
These works highlight the effectiveness of learning-based transceivers for semantic fidelity and robustness. However, the performance gains often come with the price of increased model complexity and communication loads \cite{qin2022semcom_principles}.
Meanwhile, for TOC, a key objective is to transmit only task-relevant information under stringent communication and inference constraints.
IB-based formulations have been widely adopted to explicitly characterize the rate-task tradeoff, enabling end-to-end learning and optimization of task representations \cite{shao2022ib_edge_inference}.
 Despite of these advances, two challenges remain underexplored:
\emph{(i)} how to maintain strong semantic/task performance while controlling the transmission rate under limited bandwidth, and
\emph{(ii)} how to train deep semantic transceivers stably without incurring high computational cost.

\subsection{Rate Control and Channel Codes in Learned Encoder/Decoder}
Most learned SemCom/TOC codes map semantic messages to continuous channel symbols with a fixed rate (in e.g., DeepSC/DeepJSCC)~\cite{xie2021deepsc,bourtsoulatze2019deepjscc}. Though  straightforward, such methods do not provide explicit code rates or length, and are hard to be adapted to varying channels.
To deal with time-varying channels and bandwidth constraints, prior work studies adaptive-rate transmission by varying channel symbols \cite{yang2021adaptive_rate_jscc}. Feedback-based refinement is also explored to progressively improve reconstruction quality \cite{kurka2019deepjsccf}.
Meanwhile VIB-based channel coding with a KL penalty as a rate surrogate is also proposed \cite{shao2022ib_edge_inference}. However, KL penalty does not evaluate an actual code length and may lead to posterior collapse.
EB method instead quantizes features and learns an entropy model. It yields a coding-aligned and differentiable rate estimate, enabling explicit rate control via rate--distortion optimization \cite{balle2018hyperprior}.
Thus, we will adopt EB-based rate optimization and further improve representation richness and training stability using mHC.

\subsection{Dynamic Neural Networks and Hyper-Connections}
Dynamic neural networks adapt their computation or parameters during inference to trade accuracy for computational cost \cite{han2022dynamic_survey}.
They are commonly categorized into dynamic width (activating a variable number of channels) \cite{yu2019slimmable},
dynamic depth (executing a variable number of layers/steps via early-exit or adaptive halting) \cite{graves2016act}, dynamic-parameter methods  (generating or modulating weights conditioned on inputs or layer indices) \cite{ha2017hypernetworks}, and dynamic scheduling (adaptively allocating computation across iterations) \cite{pu2025art}.
HC can be viewed as a dynamic-parameter mechanism specialized to residual networks, introducing multiple residual streams with lightweight learned mixing to enrich inter-layer information flow at negligible parameter overheads \cite{zhu2024hyperconnections}.
mHC further imposes structured constraints on the mixing to improve training stability and mitigate stream imbalance or collapse  \cite{xie2025mhc}.
Motivated by these facts, we will incorporate mHC into our proposed mHC-coded SemCom/TOC to enhance representation richness and optimization stability under stringent wireless resource constraints.

\section{System Model and Problem Formulation}
\label{sec:system_model}

\begin{figure*}[t]
\centering
\resizebox{0.98\textwidth}{!}{%
\begin{tikzpicture}[
    >=latex,
    font=\footnotesize,
    block/.style={draw, rounded corners=2pt, minimum height=0.8cm,
                  minimum width=1.5cm, align=center, fill=#1},
    block/.default=blue!8,
    bigbox/.style={draw, dashed, rounded corners=4pt, inner sep=8pt,
                   fill=#1},
    bigbox/.default=white,
    arr/.style={->, thick, >=stealth},
    darr/.style={->, thick, >=stealth, dashed},
    anno/.style={font=\scriptsize, text=black!60},
    sectitle/.style={font=\footnotesize\bfseries, text=black!80},
]

\node[bigbox=blue!3, minimum width=10.8cm, minimum height=3.8cm]
     (txbox) at (5.0, 0) {};
\node[sectitle, anchor=north] at (txbox.north) {Transmitter};

\node[font=\normalsize] (input) at (-1.2, 0) {$\mathbf{x}$};

\node[block=orange!12, minimum width=2.4cm, minimum height=1.6cm,
      draw=orange!50!black, thick]
     (enc) at (1.5, 0)
     {Semantic Encoder\\[1pt]$f_\theta$ (with mHC)};

\node[block=blue!12, minimum width=1.5cm] (chenc) at (4.5, 0)
     {Channel\\[-1pt]Encoder $g_\phi$};

\node[block=gray!10] (pnorm) at (6.4, 0) {$\mathcal{N}(\cdot)$};

\node[bigbox=green!5, minimum width=2.5cm, minimum height=2.8cm]
     (ebbox) at (8.5, -0.3) {};
\node[sectitle, anchor=north] at (ebbox.north) {Entropy Bottleneck};

\node[block=green!15, minimum width=1.3cm] (quant) at (8.5, 0) {$Q(\cdot)$};
\node[block=green!15, minimum width=1.9cm] (emodel)
     at (8.5, -1.2) {Entropy Model\\[-1pt]$p_\psi(\cdot)$};

\node[font=\small, text=green!40!black] (rate)
     at (8.5, -2.7) {$R(\hat{\mathbf{z}})$};
\draw[darr, green!50!black] (emodel.south) -- (rate.north);

\draw[arr] (input.east) -- (enc.west);
\draw[arr] (enc.east) -- (chenc.west);
\node[anno, anchor=south] at (3.0, 0.15) {$\mathbf{u}$};
\draw[arr] (chenc.east) -- (pnorm.west);
\node[anno, anchor=south] at (5.45, 0.15) {$\mathbf{z}$};
\draw[arr] (pnorm.east) -- (quant.west);
\node[anno, anchor=south] at (7.45, 0.15) {$\bar{\mathbf{z}}$};
\draw[darr, green!50!black] (quant.south) -- (emodel.north);
\node[anno, anchor=west] at (8.75, -0.6) {$\hat{\mathbf{z}}$};

\node[draw, thick, fill=yellow!18, rounded corners=3pt,
      minimum height=1.8cm, minimum width=1.7cm, align=center]
     (channel) at (11.2, 0)
     {Wireless\\[-1pt]Channel\\[3pt]
      {\scriptsize$\mathbf{y}{=}h\hat{\mathbf{z}}{+}\mathbf{n}$}};

\node[anno] at (11.2, 1.55)
     {$h{=}1$ (AWGN)\;\;/\;\; $h{\sim}\mathcal{CN}(0,1)$ (Rayleigh)};
\node[anno] at (11.2, -1.4)
     {$\mathbf{n}{\sim}\mathcal{N}(\mathbf{0},\sigma^2\mathbf{I})$};

\draw[arr] (quant.east) -- ++(0.35,0) |- (channel.west);
\node[anno, anchor=south] at (9.9, 0.15) {$\hat{\mathbf{z}}$};

\node[bigbox=purple!4, minimum width=6.0cm, minimum height=3.8cm]
     (rxbox) at (16.5, 0) {};
\node[sectitle, anchor=north] at (rxbox.north) {Receiver};

\node[block=purple!12, minimum width=1.5cm] (chdec) at (14.0, 0)
     {Channel\\[-1pt]Decoder $r_{\phi'}$};

\draw[arr] (channel.east) -- (chdec.west);
\node[anno, anchor=south] at (12.6, 0.15) {$\mathbf{y}$};

\node[block=teal!15, minimum width=1.7cm] (semhead)
     at (16.8, 0.8)
     {Semantic Head\\[-1pt]$t^{\mathrm{sem}}_\omega$};
\node[font=\small] (sout) at (19.0, 0.8) {$\hat{\mathbf{s}}$};

\node[block=violet!15, minimum width=1.7cm] (taskhead)
     at (16.8, -0.8)
     {Task Head\\[-1pt]$t^{\mathrm{task}}_{\omega'}$};
\node[font=\small] (tout) at (19.0, -0.8) {$\hat{\mathbf{s}}$};

\node[anno, anchor=west] at (19.4, 0.8)
     {\textit{mHC-Codec (Sem.)}};
\node[anno, anchor=west] at (19.4, -0.8)
     {\textit{mHC-Codec (TOC)}};

\draw[arr] (chdec.east) -- ++(0.5,0) coordinate (split);
\fill[black] (split) circle (1.5pt);
\draw[arr] (split) |- (semhead.west);
\draw[arr] (split) |- (taskhead.west);
\draw[arr] (semhead.east) -- (sout.west);
\draw[arr] (taskhead.east) -- (tout.west);

\node[anno, anchor=south] at (15.0, 0.15) {$\hat{\mathbf{u}}$};

\node[draw, thick, rounded corners=3pt, fill=red!8,
      minimum width=5.5cm, minimum height=0.65cm, align=center]
     (loss) at (13.0, -3.3)
     {$\mathcal{L} = \mathcal{L}_{\mathrm{task}}(\hat{\mathbf{s}},\mathbf{s})
       + \lambda\, R(\hat{\mathbf{z}})$};

\draw[darr, red!60] (rate.south) -- ++(0,-0.15) -| (loss.west);
\draw[darr, red!60] (16.8, -1.5) -- (16.8, -3.3) -- (loss.east);

\node[anno, text=red!50!black] at (16.8, -2.2) {$\mathcal{L}_{\mathrm{task}}$};
\node[anno, text=red!50!black] at (9.5, -2.95) {$\lambda R(\hat{\mathbf{z}})$};

\end{tikzpicture}
}
\caption{The proposed \textit{mHC-Codec} framework.
The transmitter consists of an mHC-enhanced semantic encoder $f_\theta$ (detailed in Fig.~\ref{fig:mhc_dataflow}), a channel encoder $g_\phi$, and an entropy bottleneck (EB).
The EB quantizes the latent $\bar{\mathbf{z}}$ and provides a differentiable rate proxy $R(\hat{\mathbf{z}})$.
The channel supports both AWGN ($h{=}1$) and Rayleigh fading ($h{\sim}\mathcal{CN}(0,1)$).
At the receiver, a channel decoder $r_{\phi'}$ reconstructs the features, and a task head produces the output.
The two variants, \textit{mHC-Codec (Semantic)} and \textit{mHC-Codec (Task-Oriented)}, share the same transceiver backbone and differ only in the task head and loss.}
\label{fig:framework}
\end{figure*}


\subsection{System Models}
\label{sec:sys_overview}

As shown in Fig.~\ref{fig:framework}, we consider an end-to-end learned joint source-channel coding (JSCC) system with a transmitter, a wireless channel, and a receiver.
Given a source message $\mathbf{x}$ (e.g., a token sequence of length $N$) with task target $\mathbf{s}$, the system maps source symbols to channel symbols and operates as follows.

\subsubsection{Transmitter}
The transmitter contains a semantic encoder $f_{\theta}(\cdot)$ and a channel encoder $g_{\phi}(\cdot)$.
The semantic encoder extracts semantic/task-relevant features
\begin{align}
\mathbf{u} = f_{\theta}(\mathbf{x}),
\label{eq:sem_enc}
\end{align}
and the channel encoder maps them to real-valued channel codewords
\begin{align}
\mathbf{z} = g_{\phi}(\mathbf{u}).
\label{eq:ch_enc}
\end{align}
To meet a transmit power constraint, $\mathbf{z}$ is further normalized by $\mathcal{N}(\cdot)$, yielding $\bar{\mathbf{z}} = \mathcal{N}(\mathbf{z})$.

\subsubsection{Entropy bottleneck}
EB quantizes the normalized features and provides an estimated coding rate under a learned probability model~\cite{balle2018hyperprior}.
Let $Q(\cdot)$ denote scalar quantization and $\hat{\mathbf{z}} = Q(\bar{\mathbf{z}})$ the quantized features.
The entropy model $p_{\psi}(\cdot)$ estimates the probability mass function of $\hat{\mathbf{z}}$ for entropy coding.
The ideal coding length is
\begin{align}
R(\hat{\mathbf{z}}) = \sum_{i} -\log_2 p_{\psi}(\hat{z}_i),
\label{eq:rate_proxy}
\end{align}
where $i$ indexes all elements of $\hat{\mathbf{z}}$ across tokens and feature dimensions.

\subsubsection{Channel}
We consider a wireless channel with fading coefficient $h$.
The received signal is modeled as
\begin{align}
\mathbf{y} = h\,\hat{\mathbf{z}} + \mathbf{n}, \qquad
\mathbf{n}\sim\mathcal{N}(\mathbf{0},\sigma^2\mathbf{I}),
\label{eq:channel}
\end{align}
where noise $\mathbf{n}$ has a variance $\sigma^2$ evaluated by $\mathrm{SNR}_{\mathrm{dB}}$ under the adopted power normalization.
Each element of $\hat{\mathbf{z}}$ constitutes one real-valued \emph{channel use}; the total channel uses per source block is $k \times N$, where $k$ is the bottleneck dimension and $N$ is the sequence length. The corresponding \emph{bandwidth ratio} is $\rho = k/d$, where $d$ is the feature dimension. 
We consider three channel models:
\begin{itemize}
\item \emph{AWGN}: $h{=}1$ (no fading);
\item \emph{Rician fading}: $h{=}\sqrt{K/(K{+}1)} + h_{\mathrm{s}}$, where $h_{\mathrm{s}}{\sim}\mathcal{CN}(0,1/(K{+}1))$ is the scattered component, $K$ is the Rician $K$-factor, and $\mathbb{E}[|h|^2]{=}1$;
\item \emph{Rayleigh fading}: the special case $K{=}0$, i.e., $h{\sim}\mathcal{CN}(0,1)$.
\end{itemize}
We adopt a block-fading model in which $h$ is fixed within each transmission block (length $N$) and independently drawn across blocks.
Perfect CSI is assumed at the receiver, and the impact of imperfect receiver-side CSI will be studied in Sec.~\ref{sec:exp_channel_robust}.

\subsubsection{Receiver}
The receiver consists of a channel decoder $r_{\phi'}(\cdot)$ and a task head $t_{\omega}(\cdot)$.
The channel decoder reconstructs task-relevant features
\begin{align}
\hat{\mathbf{u}} = r_{\phi'}(\mathbf{y}),
\label{eq:ch_dec}
\end{align}
and the task head produces the final prediction
\begin{align}
\hat{\mathbf{s}} = t_{\omega}(\hat{\mathbf{u}}).
\label{eq:task_head}
\end{align}
Depending on the application, $t_{\omega}(\cdot)$ can be instantiated as a language modeling head (for SemCom) or a task module (for TOC).

\subsubsection{Rate-task optimization}
The system is trained end-to-end by minimizing a Lagrangian rate-distortion/task objective:
\begin{align}
\min_{\theta,\phi,\phi',\omega,\psi}\
\mathbb{E}\!\left[\mathcal{L}(\hat{\mathbf{s}},\mathbf{s})\right]
+ \lambda\,\mathbb{E}\!\left[R(\hat{\mathbf{z}})\right],
\label{eq:rt_objective}
\end{align}
where $\mathcal{L}(\hat{\mathbf{s}},\mathbf{s})$ quantifies the loss between reconstructed $\hat{\mathbf{s}}$ and real label $\mathbf{s}$, and $\lambda \ge 0$ controls the rate penalty.

\subsection{Preliminaries on Residual Connections, HC, and mHC}
\label{sec:prelim_hc_mhc}
Fig.~\ref{fig:three_arc} illustrates the three connection paradigms considered in this work: standard residual connections, HC, and mHC.

\paragraph{Standard Residual Connection}
Let $\mathcal{F}(\cdot;\mathbf{W}_l)$ denote the $l$-th Transformer block (e.g., attention + multi-layer perceptron (MLP) with normalization) parameterized by $\mathbf{W}_l$.
The standard residual connection~\cite{he2016deep} applies $\mathcal{F}$ to the input and adds the result to the input through an identity shortcut:
\begin{align}
\mathbf{x}_{l+1} = \mathbf{x}_l + \mathcal{F}(\mathbf{x}_l;\mathbf{W}_l),
\label{eq:residual}
\end{align}
where $\mathbf{x}_l \in \mathbb{R}^{d}$ is the representation at layer $l$.
Here, a stream refers to a feature pathway that carries hidden representations across layers.
The standard residual connection contains one such stream. 
This additive structure mitigates vanishing gradients and has become a standard building block in deep networks~\cite{he2016deep}.
However, all information is propagated through one residual pathway, which may limit representation diversity in deeply stacked networks~\cite{zhu2024hyperconnections}.

\paragraph{Hyper-Connections}
To address the single-stream limitation, HC~\cite{zhu2024hyperconnections} expands the hidden representation from one vector to $S$ parallel residual streams and introduces three learnable mixing components:
(i) a pre-mixing weight $\mathbf{H}^{\mathrm{pre}}_l$ that aggregates the streams before the sub-block,
(ii) a post-mixing weight $\mathbf{H}^{\mathrm{post}}_l$ that distributes the sub-block output back to the streams, and
(iii) a residual-path mixing matrix $\mathbf{H}^{\mathrm{res}}_l \in \mathbb{R}^{S \times S}$ that mixes the streams on the skip path.
The hidden representation at layer $l$ is $\mathbf{x}_l \in \mathbb{R}^{S \times d}$, and for a sequence of length $N$ it can be stacked as $\mathbf{X}_l \in \mathbb{R}^{N \times S \times d}$.
In the general HC formulation, all three components can be $S \times S$ matrices. In this work, $\mathbf{H}^{\mathrm{pre}}_l$ and $\mathbf{H}^{\mathrm{post}}_l$ are weight vectors in $\mathbb{R}^{S}$ that aggregate $S$ streams into a single view and distribute the output back, respectively.
The HC layer update is
\begin{align}
\mathbf{x}_{l+1}
&= \mathbf{H}^{\mathrm{res}}_l \mathbf{x}_l
+ \left(\mathbf{H}^{\mathrm{post}}_l\right)^{\!\top}\,
\mathcal{F}\!\left(\mathbf{H}^{\mathrm{pre}}_l \mathbf{x}_l;\mathbf{W}_l\right),
\label{eq:hc_update}
\end{align}
where $\mathbf{H}^{\mathrm{res}}_l\mathbf{x}_l$ mixes the $S$ streams on the skip path ($\mathbb{R}^{S \times d}\!\to\!\mathbb{R}^{S \times d}$), $\mathbf{H}^{\mathrm{pre}}_l\mathbf{x}_l$ aggregates the streams into a single input ($\mathbb{R}^{S \times d}\!\to\!\mathbb{R}^{d}$), and $(\mathbf{H}^{\mathrm{post}}_l)^{\!\top}$ distributes the sub-block output back to the $S$ streams ($\mathbb{R}^{d}\!\to\!\mathbb{R}^{S \times d}$).
\eqref{eq:hc_update} reduces to the standard residual form in~\eqref{eq:residual} when $S=1$ or when the mixing weights recover the single-stream case. Thus, HC can be viewed as a generalization of residual connections.

\paragraph{Manifold-Constrained Hyper-Connections}
While HC provides multiple information pathways across layers, unconstrained mixing can cause stream imbalance or collapse in deep networks~\cite{xie2025mhc}.
mHC addresses this issue by applying structured normalization to the mixing weights~\cite{xie2025mhc}.
The residual-path matrix $\mathbf{H}^{\mathrm{res}}_l \in \mathbb{R}^{S \times S}$ is constrained to DS manifold:
\begin{align}
\mathbf{H}\mathbf{1}=\mathbf{1},\quad
\mathbf{1}^{\top}\mathbf{H}=\mathbf{1}^{\top},\quad
H_{ij}\ge 0,\ \forall i,j,
\label{eq:ds_constraint}
\end{align}
where $\mathbf{1}$ is the all-ones vector.
The pre-mixing and post-mixing weights $\mathbf{H}^{\mathrm{pre}}_l$ and $\mathbf{H}^{\mathrm{post}}_l$ are softmax-normalized, so their entries are nonnegative and sum to one.
They therefore define stream-combination weights for input aggregation and output redistribution, respectively.
The DS constraint on $\mathbf{H}^{\mathrm{res}}_l$ makes the residual mixing mass-preserving: each row or each column sums to one. Thus, the streams are mixed through a convex redistribution across branches.
Together, DS-constrained residual mixing and softmax-normalized pre/post weights improve robustness and reduce stream collapse in deep networks.
The theoretical properties of DS mixing related to rate efficiency will be analyzed in Sec.~\ref{sec:theory}.


\section{Proposed Method: \textit{mHC coding scheme}}
\label{sec:proposed}

\begin{figure}[t]
\centering
\resizebox{0.7\columnwidth}{!}{%
\begin{tikzpicture}[
    >=Stealth,
    font=\small,
    arr/.style={->, thick, >=Stealth},
    procbox/.style={draw, thick, rounded corners=3pt, minimum height=0.7cm, align=center},
    fblock/.style={procbox, fill=cyan!10, minimum width=2.2cm, minimum height=0.85cm},
    dsblock/.style={procbox, fill=red!10, draw=red!50!black, minimum width=2.0cm, minimum height=0.8cm},
    smblock/.style={procbox, fill=blue!10, draw=blue!40!black, minimum width=2.0cm, minimum height=0.8cm},
    expandbox/.style={procbox, fill=green!8, draw=green!50!black, minimum width=2.6cm, minimum height=0.75cm},
    normbox/.style={procbox, fill=gray!8, minimum width=2.2cm},
    sumnode/.style={draw, thick, circle, minimum size=0.55cm, inner sep=0pt, fill=yellow!15, font=\normalsize},
    dimtag/.style={font=\scriptsize, text=teal!70!black, fill=teal!6, draw=teal!25,
                   rounded corners=1.5pt, inner xsep=3pt, inner ysep=1.5pt},
    anno/.style={font=\scriptsize, text=gray!55!black},
]

\def\xleft{-2.8}
\def\xright{2.8}
\def\xmid{0}

\node[font=\normalsize] (input) at (\xmid, 0) {$\mathbf{x} \in \mathbb{R}^{d}$};

\node[expandbox] (expand) at (\xmid, -1.1)
    {Stream Expansion\\[-2pt]{\scriptsize replicate $\times S$}};
\draw[arr] (input) -- (expand);

\node[dimtag] (expanddim) at (\xmid, -1.95) {$\mathbb{R}^{d} \to \mathbb{R}^{S \times d}$};
\draw[arr] (expand) -- (expanddim);

\def\ylin{-2.8}
\node[font=\normalsize] (xl) at (\xmid, \ylin) {$\mathbf{x}_l \in \mathbb{R}^{S \times d}$};
\draw[arr] (expanddim) -- (xl);

\def\yfork{-3.4}
\coordinate (fork) at (\xmid, \yfork);
\draw[thick] (xl) -- (fork);
\draw[arr] (fork) -| (\xleft, -3.9);
\draw[arr] (fork) -| (\xright, -3.9);

\def\yhres{-4.5}
\node[dsblock] (Hres) at (\xleft, \yhres)
    {$\mathbf{H}^{\mathrm{res}}_l$\\[-1pt]
     {\scriptsize DS,\; $S{\times}S$}};

\node[dimtag] (resdim) at (\xleft, -5.5) {$\mathbb{R}^{S \times d}$};
\draw[arr] (Hres) -- (resdim);

\def\ysum{-10.2}
\draw[arr, red!40!black] (\xleft, -5.75) -- (\xleft, \ysum);

\node[anno, rotate=90, anchor=south, text=red!45!black]
    at ({\xleft-0.45}, -7.8) {\small residual path};


\def\yhpre{-4.5}
\node[smblock] (Hpre) at (\xright, \yhpre)
    {$\mathbf{H}^{\mathrm{pre}}_l$\\[-1pt]
     {\scriptsize SM,\; $\mathbb{R}^{S}$}};

\node[dimtag] (predim) at (\xright, -5.4) {$\mathbb{R}^{S \times d} \!\to\! \mathbb{R}^{d}$};
\draw[arr] (Hpre) -- (predim);

\def\yF{-6.4}
\node[fblock] (Fblock) at (\xright, \yF)
    {$\mathcal{F}(\cdot;\,\mathbf{W}_l)$\\[-1pt]
     {\scriptsize Attn / MLP}};
\draw[arr] (\xright, -5.6) -- (Fblock);

\node[dimtag] (fdim) at (\xright, -7.35) {$\mathbb{R}^{d}$};
\draw[arr] (Fblock) -- (fdim);

\def\yhpost{-8.2}
\node[smblock] (Hpost) at (\xright, \yhpost)
    {$\mathbf{H}^{\mathrm{post}}_l$\\[-1pt]
     {\scriptsize SM,\; $\mathbb{R}^{S}$}};
\draw[arr] (fdim) -- (Hpost);

\node[dimtag] (postdim) at (\xright, -9.15) {$\mathbb{R}^{d} \!\to\! \mathbb{R}^{S \times d}$};
\draw[arr] (Hpost) -- (postdim);

\node[anno, rotate=90, anchor=south, text=blue!45!black]
    at (4.4, -7.8) {\small branch path};

\node[sumnode] (sum) at (\xmid, \ysum) {$+$};
\draw[arr, red!40!black] (\xleft, \ysum) -- (sum);
\draw[arr, blue!40!black] (\xright, -9.4) -- (\xright, \ysum) -- (sum);

\def\yxout{-11.0}
\node[font=\normalsize] (xlp1) at (\xmid, \yxout) {$\mathbf{x}_{l+1} \in \mathbb{R}^{S \times d}$};
\draw[arr] (sum) -- (xlp1);

\begin{pgfonlayer}{background}
\node[draw, thick, rounded corners=7pt, densely dashed, gray!50,
      fill=gray!2, inner xsep=12pt, inner ysep=10pt,
      fit=(xl)(Hres)(Hpre)(Fblock)(Hpost)(sum)(xlp1)
      (resdim)(postdim)(predim)(fdim)] (layerbox) {};
\end{pgfonlayer}

\node[font=\small\bfseries, text=gray!65!black, anchor=north west]
    at ([yshift=2pt]layerbox.north west)
    {\;\;mHC Layer $l$\;\;($\times\, L$)};

\def\yrepeat{-11.8}
\node[font=\large, text=gray] at (\xmid, \yrepeat) {$\vdots$};

\def\yln{-12.5}
\node[normbox] (ln) at (\xmid, \yln) {LayerNorm};
\draw[arr] (\xmid, -12.1) -- (ln);

\def\yreduce{-13.6}
\node[expandbox, fill=orange!8, draw=orange!50!black] (reduce) at (\xmid, \yreduce)
    {Stream Reduction\\[-2pt]{\scriptsize $\mathbf{y} = \textstyle\sum_{s}\mathbf{x}_{L,s}$}};
\draw[arr] (ln) -- (reduce);

\node[dimtag] (reducedim) at (\xmid, -14.45) {$\mathbb{R}^{S \times d} \to \mathbb{R}^{d}$};
\draw[arr] (reduce) -- (reducedim);

\def\yout{-15.2}
\node[font=\normalsize] (output) at (\xmid, \yout)
    {$\mathbf{y} \in \mathbb{R}^{d}$};
\draw[arr] (reducedim) -- (output);

\end{tikzpicture}
}
\caption{Dataflow of the mHC-enhanced semantic encoder. Input token embeddings are replicated into $S$ streams ($\mathbb{R}^d \to \mathbb{R}^{S \times d}$). At each layer, $\mathbf{H}^{\mathrm{res}}_l$ (DS-constrained, $S{\times}S$) mixes the residual streams, while $\mathbf{H}^{\mathrm{pre}}_l$ (softmax) aggregates them into a single input for the Transformer sub-block $\mathcal{F}$, and $\mathbf{H}^{\mathrm{post}}_l$ (softmax) distributes the output back to $S$ streams. After $L$ layers, the streams are reduced to $\mathbb{R}^d$ via summation across streams.}
\label{fig:mhc_dataflow}
\end{figure}

In what follows, we will give details on the proposed \textit{mHC coding scheme}, including the mHC-enhanced semantic encoder, EB, task instantiations, training procedure, and resource efficiency analysis.

\subsection{mHC-Enhanced Semantic Encoder}
\label{sec:method_encoder}

Fig.~\ref{fig:mhc_dataflow} illustrates the mHC-enhanced semantic encoder $f_{\theta}(\cdot)$, which consists of stream expansion, $L$ mHC Transformer layers, and stream reduction.

\subsubsection{Stream expansion}
The input token embeddings $\mathbf{x}\in\mathbb{R}^{d}$ are replicated across $S$ streams to form $\mathbf{x}_0\in\mathbb{R}^{S \times d}$.
All mixing matrices are initialized to the identity (or uniform weights for softmax-normalized ones). Thus, the model starts from the standard residual form in \eqref{eq:residual} and smoothly transitions into multi-stream mixing during training. 

\subsubsection{mHC layer update}
As shown in Fig.~\ref{fig:mhc_dataflow}, each layer $l$ splits $\mathbf{x}_l\in\mathbb{R}^{S \times d}$ into two parallel paths.
The \emph{residual path} applies the DS-constrained matrix $\mathbf{H}^{\mathrm{res}}_l\in\mathbb{R}^{S \times S}$, which mixes the $S$ streams while preserving dimensionality ($\mathbb{R}^{S \times d}\!\to\!\mathbb{R}^{S \times d}$).
The \emph{branch path} first aggregates the $S$ streams into a single representation via the softmax-normalized weight vector $\mathbf{H}^{\mathrm{pre}}_l\in\mathbb{R}^{S}$ ($\mathbb{R}^{S \times d}\!\to\!\mathbb{R}^{d}$), passes it through the Transformer sub-block $\mathcal{F}(\cdot;\mathbf{W}_l)$ ($\mathbb{R}^{d}\!\to\!\mathbb{R}^{d}$), and distributes the output back to $S$ streams via $\mathbf{H}^{\mathrm{post}}_l\in\mathbb{R}^{S}$ ($\mathbb{R}^{d}\!\to\!\mathbb{R}^{S \times d}$).
The two paths are summed element-wise to produce the next-layer representation using \eqref{eq:hc_update}.
In practice, $\mathbf{H}^{\mathrm{res}}_l$ is parameterized through unconstrained logits $\mathbf{A}^{\mathrm{res}}_l$ and projected onto the DS manifold via Sinkhorn-Knopp algorithm~\cite{sinkhorn1967concerning}:
$\mathbf{H}^{\mathrm{res}}_l = \Pi_{\mathcal{DS}}(\mathbf{A}^{\mathrm{res}}_l)$ as in \eqref{eq:ds_constraint},
while $\mathbf{H}^{\mathrm{pre}}_l = \mathrm{softmax}(\mathbf{A}^{\mathrm{pre}}_l)$ and
$\mathbf{H}^{\mathrm{post}}_l = \mathrm{softmax}(\mathbf{A}^{\mathrm{post}}_l)$.
All projections are differentiable and add negligible overhead since $S \ll d$.

\subsubsection{Stream reduction}
After the final encoder layer and LayerNorm, the $S$ streams are aggregated into a single representation for the channel encoder $g_{\phi}(\cdot)$ via summation across streams:
$\mathbf{y} = \sum_{s=1}^{S}\mathbf{x}_{L,s}$,
where $\mathbf{x}_{L,s}\in\mathbb{R}^{d}$ is the $s$-th stream of the final-layer output.
This ensures that the downstream channel encoder and EB operate on a single-stream features of dimension $d$, preserving full compatibility with standard entropy-coding pipelines and introducing no additional transmission symbols.

\subsection{Rate-Aware Entropy Bottleneck}
\label{sec:method_eb}

The EB, introduced as part of the system model in Sec.~\ref{sec:sys_overview}, plays a pivotal role in \textit{mHC coding scheme} by linking the encoder design to explicit rate control.
Thus we will discuss the EB design in our framework.

\subsubsection{Quantization and straight-through estimation}
During training, quantization $Q(\cdot)$ is non-differentiable.
Following~\cite{balle2018hyperprior}, we replace nearest-integer rounding with additive uniform noise $\hat{\mathbf{z}} = \bar{\mathbf{z}} + \mathbf{e}$, $\mathbf{e}\sim\mathcal{U}(-\tfrac{1}{2},\tfrac{1}{2})$, which provides a differentiable approximation to scalar quantization and allows gradients to be propagated through the rate $R(\hat{\mathbf{z}})$ in~\eqref{eq:rate_proxy}.
At inference, true rounding is applied.

\subsubsection{Learned entropy model}
The parametric entropy model $p_{\psi}(\cdot)$ is implemented as a fully factorized density whose parameters are jointly optimized with the rest of the network through the rate loss $\lambda\,\mathbb{E}[R(\hat{\mathbf{z}})]$.
Because the rate term directly penalizes the coding cost, the encoder is encouraged to produce features, of which marginal statistics are well-matched by $p_{\psi}$, yielding compact entropy-coded representations.

\textit{Remark:}
We define $k$ as the bottleneck dimension. The bottleneck dimension $k$ fixes the number of channel uses for wireless transmission, while EB controls the entropy-coded bit cost of the quantized channel features. Thus, EB improves transmission efficiency by lowering the entropy-coded bit cost of the quantized features, rather than by reducing the dimensionality.

\subsection{Task Instantiations}
\label{sec:method_instantiations}

\subsubsection{\textit{mHC-SemCom}}
The receiver uses a semantic head $t^{\mathrm{sem}}_{\omega}(\cdot)$ (e.g., a linear projection followed by softmax over the vocabulary) to produce token-level distributions.
The task loss is the cross-entropy between the predicted and ground-truth token distributions:
\begin{align}
\mathcal{L}_{\mathrm{sem}} = -\frac{1}{N}\sum_{i=1}^{N}\log p_{\omega}(s_i \mid \hat{\mathbf{u}}),
\label{eq:loss_sem}
\end{align}
where $N$ is the sequence length and $s_i$ is the $i$-th ground-truth token.

\subsubsection{\textit{mHC-TOC}}
The receiver uses a task head $t^{\mathrm{task}}_{\omega}(\cdot)$ tailored to a downstream objective (e.g., classification).
The task loss is a standard supervised loss:
\begin{align}
\mathcal{L}_{\mathrm{TOC}} = -\sum_{c=1}^{C} y_c \log \hat{y}_c,
\label{eq:loss_task}
\end{align}
where $\hat{y}_c = t^{\mathrm{task}}_{\omega}(\hat{\mathbf{u}})_c$ is the predicted probability for class $c$ and $y_c$ is the one-hot label.

\subsection{SemCom Pre-training and TOC Adaptation}
\label{sec:method_train}

The proposed framework is trained and evaluated through a two-stage procedure, as summarized in Algorithm~\ref{alg:training}.
Stage~1 performs end-to-end SemCom pre-training.
Specifically, the mHC semantic encoder, channel encoder, EB, channel decoder, and semantic head are jointly optimized by minimizing the rate-semantic objective $\mathcal{L}_{\mathrm{sem}} + \lambda\,R(\hat{\mathbf{z}})$ in~\eqref{eq:rt_objective}.
This stage learns rate-aware semantic representations that are robust to wireless channel impairments.
Stage~2 adapts the learned representation to TOC.
The pretrained SemCom transceiver is frozen, and only a lightweight task head $t^{\mathrm{task}}_{\omega'}(\cdot)$ is trained with $\mathcal{L}_{\mathrm{TOC}}$ in~\eqref{eq:loss_task}.
Thus, SemCom uses the end-to-end training in Stage~1, whereas TOC adds a task-head training stage on top of the frozen SemCom representation.

\begin{algorithm}[t]
\caption{SemCom Pre-training and TOC Adaptation of the \textit{mHC coding scheme}}
\label{alg:training}
\begin{algorithmic}[1]
\Statex \textit{Stage~1: SemCom Pre-training} (update all parameters)
\Require Input $\mathbf{x}$, target $\mathbf{s}$, channel $h$, $\mathrm{SNR}_{\mathrm{dB}}$, weight $\lambda$
\State $\mathbf{u} \gets f_{\theta}(\mathbf{x})$
    \Comment{mHC semantic encoder~\eqref{eq:hc_update}}
\State $\mathbf{z} \gets g_{\phi}(\mathbf{u})$
    \Comment{Channel encoder}
\State $\bar{\mathbf{z}} \gets \mathcal{N}(\mathbf{z})$
    \Comment{Power normalization}
\State $\hat{\mathbf{z}},\, R(\hat{\mathbf{z}}) \gets \mathrm{EB}(\bar{\mathbf{z}};\psi)$
    \Comment{Quantize and rate estimate}
\State $\mathbf{y} \gets h\,\hat{\mathbf{z}} + \mathbf{n}$,\;
    $\mathbf{n}\!\sim\!\mathcal{N}(\mathbf{0},\sigma^2\mathbf{I})$
    \Comment{Channel~\eqref{eq:channel}}
\State $\hat{\mathbf{u}} \gets r_{\phi'}(\mathbf{y})$
    \Comment{Channel decoder}
\State $\hat{\mathbf{s}} \gets t^{\mathrm{sem}}_{\omega}(\hat{\mathbf{u}})$
    \Comment{Semantic head}
\State $\mathcal{L} \gets \mathcal{L}_{\mathrm{sem}} + \lambda\, R(\hat{\mathbf{z}})$
    \Comment{Eqs.~\eqref{eq:loss_sem},\,\eqref{eq:rt_objective}}
\State Update $\{\theta, \phi, \psi, \phi', \omega\}$ via $\nabla\mathcal{L}$
\Statex
\Statex \textit{Stage~2: TOC Head Training} (freeze pretrained SemCom transceiver)
\State Execute Steps~1\,--\,6 with frozen parameters
\State $\hat{\mathbf{s}} \gets t^{\mathrm{task}}_{\omega'}(\hat{\mathbf{u}})$
    \Comment{Task head}
\State $\mathcal{L} \gets \mathcal{L}_{\mathrm{TOC}}$
    \Comment{Eq.~\eqref{eq:loss_task}}
\State Update $\omega'$ via $\nabla\mathcal{L}$
\end{algorithmic}
\end{algorithm}

\subsection{Resource Efficiency Analysis}
\label{sec:method_resource}

Since resource efficiency is a central design goal of \textit{mHC coding scheme}, we systematically analyze the computational, communication, and memory overhead introduced by mHC relative to the standard residual connection baseline.

\subsubsection{Computational overhead (FLOPs)}
The dominant computation in each Transformer block is self-attention and the feed-forward network, both scaling as $\mathcal{O}(Nd^2)$ per layer, where $N$ is the sequence length and $d$ is the feature dimension.
mHC introduces three stream-mixing operations per layer, corresponding to pre-mixing, post-mixing, and residual-path mixing.
Their total cost scales as $\mathcal{O}(NSd)$, since the mixing is applied over the $S$ streams for each token and feature dimension.
Since $S \ll d$ (e.g., $S{=}4$, $d{=}288$), the relative overhead is typically ${<}\,3\%$:
\begin{equation}
\frac{\Delta\mathrm{FLOPs}}{\mathrm{FLOPs}_{\mathrm{baseline}}}
= \frac{\mathcal{O}(NSd)}{\mathcal{O}(Nd^2)}
= \mathcal{O}\!\left(\frac{S}{d}\right).
\label{eq:flops_ratio}
\end{equation}
Sinkhorn-Knopp projection operates on $S\times S$ matrices and has a computational complexity of $\mathcal{O}(S^2 t_{\max})$ per layer (e.g., $t_{\max}{=}20$), which is independent of $d$ and $N$ and is negligible compared with the $\mathcal{O}(Nd^2)$ cost of self-attention and feed-forward computation in each Transformer block.

\subsubsection{Parameter overhead}
Each layer introduces three $S\times S$ learnable parameters for the three unconstrained mixing matrices (before DS projection).
For an $L$-layer encoder, the total parameter overhead is $3S^2 L$.
Compared to the backbone parameter count of $\mathcal{O}(Ld^2)$, the relative increase is
\begin{equation}
\frac{\Delta\mathrm{Params}}{\mathrm{Params}_{\mathrm{baseline}}}
= \frac{\mathcal{O}(S^2 L)}{\mathcal{O}(Ld^2)}
= \mathcal{O}\!\left(\frac{S^2}{d^2}\right),
\label{eq:param_ratio}
\end{equation}
which is vanishingly small (e.g., ${\approx}\,0.1\%$ for $S=4$, $d=128$).

\subsubsection{Transmission rate overhead}
The multi-stream representation exists only inside the semantic encoder.
After the final encoder layer, the $S$ streams are combined into a single stream (cf.\ Sec.~\ref{sec:method_encoder}) before entering the channel encoder and EB.
Therefore, the channel features $\hat{\mathbf{z}}$ have the same dimensionality as in the baseline, and no extra symbols are transmitted.
Moreover, as will be shown in Sec.~\ref{sec:theory}, DS mixing is entropy non-increasing. Thus, the ideal coding length does not increase due to the multi-stream architecture.

\subsubsection{Memory footprint}
The multi-stream representation $\mathbf{x}_l\in\mathbb{R}^{S\times d}$ is maintained only inside the $L$-layer encoder; all downstream components (channel encoder, EB, channel, decoder) operate on the reduced single-stream features $\mathbf{y}\in\mathbb{R}^{d}$.
Therefore, the encoder activation memory scales as $\mathcal{O}(S)$ relative to the baseline, while the rest of the pipeline remains unchanged~\cite{zhu2024hyperconnections}.

\subsubsection{Summary}
Table~\ref{tab:resource_summary} summarizes the resource comparison.
The mHC architecture provides richer representational capacity at the minimal cost across all resource dimensions.

\begin{table}[t]
\centering
\caption{Resource overhead of mHC relative to a residual-connection baseline.}
\label{tab:resource_summary}
\begin{tabular}{l c c}
\toprule
Resource Dimension & Overhead & Depends on \\
\midrule
FLOPs per layer & $\mathcal{O}(S/d)$ & $S, d$ \\
Parameters per layer & $\mathcal{O}(S^2/d^2)$ & $S, d$ \\
Transmitted symbols & $0$ (unchanged) & -- \\
Ideal coding cost & $\le 0$ (Corollary~\ref{cor:entropy_noninc_main}) & DS property \\
Encoder activation memory & ${\sim}\, S\times$ & $S$ \\
\bottomrule
\end{tabular}
\end{table}

\section{Theoretical Analysis}
\label{sec:theory}

This section establishes the rate-theoretic properties of DS-constrained stream mixing.
We first prove that DS mixing is entropy non-increasing (Sec.~\ref{sec:theory_entropy}), then connect this result to practical EB rate behavior (Sec.~\ref{sec:theory_rate}), and finally analyze its implication for the rate--distortion tradeoff (Sec.~\ref{sec:theory_rd}).

\subsection{Entropy Non-Increasing Property of DS Mixing}
\label{sec:theory_entropy}

Consider the multi-stream representation $\bZ\in\RR^{S\times d}$ at a given encoder layer.
The DS-constrained residual mixing produces $\widetilde{\bZ}=\bH\bZ$, where $\bH\in\DS_S$ acts on the stream dimension (i.e., each column of $\bZ$ undergoes the same linear mixing).
We impose the following mild regularity condition.

\begin{assumption}[Non-singularity]
\label{assump:invertible_main}
The mixing matrix $\bH\in\DS_S$ is non-singular, i.e., $\det(\bH)\neq 0$.
\end{assumption}

Assumption~\ref{assump:invertible_main} is mild: singular DS matrices form a measure-zero subset of $\DS_S$. Thus a randomly initialized or learned $\bH$ is non-singular with probability one.

\begin{theorem}[Differential entropy under linear stream mixing]
\label{thm:lin_entropy_main}
Let $\bZ\in\RR^{S\times d}$ be a continuous random matrix with well-defined differential entropy, and let $\widetilde{\bZ}=\bH\bZ$ with $\bH\in\RR^{S\times S}$ satisfying Assumption~\ref{assump:invertible_main}.
Then
\begin{equation}
h\!\big(\vecop(\widetilde{\bZ})\big)
= h\!\big(\vecop(\bZ)\big) + d\log|\det(\bH)|,
\label{eq:lin_entropy_main}
\end{equation}
where $\vecop(\cdot)$ denotes columnwise vectorization and $h(\cdot)$ denotes differential entropy.
\end{theorem}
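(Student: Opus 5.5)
The plan is to reduce the matrix statement to the standard change-of-variables formula for differential entropy under an invertible linear map of a random vector. Concretely, I will use the identity $\vecop(\bH\bZ)=(\bI_d\otimes\bH)\,\vecop(\bZ)$, where $\vecop$ stacks columns. This holds because each of the $d$ columns $\mathbf{z}_j\in\RR^S$ of $\bZ$ is mapped to $\bH\mathbf{z}_j$, so the stacked vector is transformed by the block-diagonal matrix with $d$ copies of $\bH$ on the diagonal. Writing $\mathbf{v}=\vecop(\bZ)\in\RR^{Sd}$ and $\mathbf{M}=\bI_d\otimes\bH$, the claim becomes $h(\mathbf{M}\mathbf{v})=h(\mathbf{v})+\log|\det\mathbf{M}|$, together with the determinant identity $\det(\bI_d\otimes\bH)=\det(\bH)^d$.

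The first step proves the vector change-of-variables formula. By Assumption~\ref{assump:invertible_main} and the determinant identity, $\mathbf{M}$ is non-singular. Hence $\mathbf{w}=\mathbf{M}\mathbf{v}$ has density
\begin{equation*}
p_{\mathbf{w}}(\mathbf{w})=p_{\mathbf{v}}(\mathbf{M}^{-1}\mathbf{w})\,|\det\mathbf{M}|^{-1}.
\end{equation*}
I will substitute this into $h(\mathbf{w})=-\int p_{\mathbf{w}}\log p_{\mathbf{w}}$ and change variables $\mathbf{w}=\mathbf{M}\mathbf{v}$, with Jacobian $|\det\mathbf{M}|$. This gives
\begin{equation*}
h(\mathbf{w})=-\int p_{\mathbf{v}}(\mathbf{v})\bigl[\log p_{\mathbf{v}}(\mathbf{v})-\log|\det\mathbf{M}|\bigr]\,d\mathbf{v}=h(\mathbf{v})+\log|\det\mathbf{M}|.
\end{equation*}
The hypothesis that $h(\vecop(\bZ))$ is well defined, meaning the integral converges (possibly to an extended value), ensures $h(\mathbf{w})$ is equally well defined. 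This is because the two integrands differ by the finite constant $\log|\det\mathbf{M}|$ integrated against a probability density.

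The second step is the determinant identity. Since $\bI_d\otimes\bH=\mathrm{blkdiag}(\bH,\dots,\bH)$, its determinant is the product of the $d$ block determinants, namely $\det(\bH)^d$. Therefore $\log|\det\mathbf{M}|=d\log|\det\bH|$, and combining with the first step gives \eqref{eq:lin_entropy_main}.

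The main obstacle is not technical depth but bookkeeping. One point is the vectorization convention: with column-wise $\vecop$ the operator is $\bI_d\otimes\bH$, whereas row-wise stacking would give $\bH\otimes\bI_d$, and either way the determinant is $\det(\bH)^d$. The other point is measure-theoretic care, namely that $\bZ$ has a joint density on $\RR^{Sd}$ and that the extended-value integrals are handled consistently. No DS property is used here; the DS constraint enters only afterwards, when combined with $|\det\bH|\le 1$ for DS matrices, for example via $|\det\bH|\le\prod_i\|\bH\mathbf{e}_i\|_2\le 1$ by Hadamard's inequality.
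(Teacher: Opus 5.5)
Your proof is correct and follows the paper's route. Both use the identity $\vecop(\bH\bZ)=(\bI_d\otimes\bH)\vecop(\bZ)$, the change-of-variables rule $h(\mathbf{M}\mathbf{v})=h(\mathbf{v})+\log|\det\mathbf{M}|$, and $\det(\bI_d\otimes\bH)=\det(\bH)^d$; the only cosmetic differences are that you derive the change-of-variables rule explicitly (the paper quotes it) and get the determinant from the block-diagonal structure rather than the Kronecker determinant formula.
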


\begin{IEEEproof}
See Appendix~\ref{app:proof_lin_entropy}.
\end{IEEEproof}

\begin{theorem}[Determinant bound for DS matrices]
\label{thm:det_bound_main}
For any $\bH\in\DS_S$, $|\det(\bH)|\le 1$.
Equality holds if and only if $\bH$ is a permutation matrix.
\end{theorem}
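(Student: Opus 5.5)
The plan is to bound $|\det(\bH)|$ by a product of Euclidean column norms using Hadamard's inequality, and then to bound each column norm by $1$ using the DS constraints in \eqref{eq:ds_constraint}. Write $\mathbf{h}_j$ for the $j$-th column of $\bH\in\DS_S$. Hadamard's inequality gives
\begin{equation*}
|\det(\bH)| \le \prod_{j=1}^{S}\|\mathbf{h}_j\|_2 .
\end{equation*}
Each column is nonnegative and satisfies $\bone^\top\mathbf{h}_j=1$. Since $0\le H_{ij}\le 1$, we have $H_{ij}^2\le H_{ij}$, so
\begin{equation*}
\|\mathbf{h}_j\|_2^2=\sum_i H_{ij}^2\le\sum_i H_{ij}=1 .
\end{equation*}
Hence every factor is at most one, and $|\det(\bH)|\le 1$. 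Only the column-sum and nonnegativity constraints are used here; by symmetry one could equally work with rows.

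For the equality case, the "if" direction is immediate: a permutation matrix is DS and has determinant $\pm1$. For the "only if" direction, suppose $|\det(\bH)|=1$. Then both inequalities above must be tight.

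First, $\|\mathbf{h}_j\|_2=1$ for every $j$ forces $\sum_i (H_{ij}-H_{ij}^2)=0$. Each term is nonnegative, so every entry satisfies $H_{ij}\in\{0,1\}$. Together with the column sums equal to one, each column is a standard basis vector.

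Second, the row sums equal one, so each row contains exactly one $1$. Therefore $\bH$ is a permutation matrix.

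The second step can also be reached without invoking the equality case of Hadamard, since the $0/1$ structure plus the DS constraints already suffice. One still needs $\det\neq0$ to rule out repeated columns, but that is automatic once $|\det(\bH)|=1$.

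I do not expect a serious obstacle; the proof is essentially two lines of inequalities. The only point needing care is the equality characterization. In particular, the column-norm bound must be applied entrywise so that tightness pins down the $0/1$ entries, rather than only using the norm bound abstractly. An alternative route would use Birkhoff's theorem, writing $\bH$ as a convex combination of permutations, but determinants are not convex, so the Hadamard route is cleaner.
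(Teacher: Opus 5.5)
Your proof is correct and follows the same route as the paper: Hadamard's inequality, the bound $\|\cdot\|_2\le\|\cdot\|_1=1$ for nonnegative vectors summing to one, and tightness forcing $0/1$ entries and hence a permutation matrix. The only differences are cosmetic. You apply Hadamard to columns rather than rows, and you spell out the entrywise equality step via $H_{ij}(1-H_{ij})=0$, which the paper states more tersely. Your remark that $\det\neq 0$ is needed to exclude repeated columns is unnecessary, since the row-sum constraint already does that.
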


\begin{IEEEproof}
See Appendix~\ref{app:proof_det_bound}.
\end{IEEEproof}

Combining Theorems~\ref{thm:lin_entropy_main} and~\ref{thm:det_bound_main} yields the following.

\begin{corollary}[Entropy non-increasing under DS mixing]
\label{cor:entropy_noninc_main}
Under Assumption~\ref{assump:invertible_main}, if $\bH\in\DS_S$, then
\begin{equation}
h\!\big(\vecop(\widetilde{\bZ})\big)
\le h\!\big(\vecop(\bZ)\big),
\end{equation}
with equality if and only if $\bH$ is a permutation matrix.
\end{corollary}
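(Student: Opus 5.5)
The corollary follows by combining the two preceding theorems. Under Assumption~\ref{assump:invertible_main}, Theorem~\ref{thm:lin_entropy_main} gives the exact identity
\[
h\!\big(\vecop(\widetilde{\bZ})\big) = h\!\big(\vecop(\bZ)\big) + d\log|\det(\bH)|,
\]
so the whole question reduces to the sign of $d\log|\det(\bH)|$. Since $\bH\in\DS_S$, Theorem~\ref{thm:det_bound_main} gives $|\det(\bH)|\le 1$. Assumption~\ref{assump:invertible_main} gives $|\det(\bH)|>0$, so the logarithm is finite. Hence $\log|\det(\bH)|\le 0$, and multiplying by $d\ge 1$ preserves the sign. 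This yields $h(\vecop(\widetilde{\bZ}))\le h(\vecop(\bZ))$.

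For the equality case, I would use the identity again. Equality holds if and only if $d\log|\det(\bH)|=0$. Because $d>0$, this is equivalent to $|\det(\bH)|=1$. By the equality clause of Theorem~\ref{thm:det_bound_main}, this happens if and only if $\bH$ is a permutation matrix. Conversely, a permutation matrix is doubly stochastic and non-singular with $|\det|=1$, so Assumption~\ref{assump:invertible_main} holds automatically in that case.

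The only step needing care is that $h(\vecop(\bZ))$ is finite, so that the additive identity can be rearranged into an inequality without $\infty-\infty$ issues. Finiteness is the meaning of ``well-defined differential entropy'' in Theorem~\ref{thm:lin_entropy_main}. Beyond that, the corollary is a direct sign argument, and all the substantive work sits in the two theorems it invokes.
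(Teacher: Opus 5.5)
Your proposal is correct and matches the paper's approach: the paper gives no separate proof and obtains the corollary by combining Theorem~\ref{thm:lin_entropy_main} (the exact shift $d\log|\det(\bH)|$) with Theorem~\ref{thm:det_bound_main} ($|\det(\bH)|\le 1$, with equality exactly for permutation matrices). Your remark that $h(\vecop(\bZ))$ must be finite for the equality characterization to hold is a precision the paper leaves implicit.
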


\subsection{Implications for Rate Behavior}
\label{sec:theory_rate}

We now connect the entropy non-increasing property to the rate behavior of the EB.
Let $\hat{\bm z}$ denote the quantized channel features transmitted over the wireless link.

\subsubsection{Ideal vs.\ model-based rate}
Under ideal entropy coding with the true mass function $p(\hat{\bm z})$, the expected code length equals the entropy:
\begin{equation}
R_{\mathrm{ideal}} = H(\hat{\bm z}).
\end{equation}
In practice, the EB employs a learned parametric model $q(\hat{\bm z})$, and the realized rate satisfies
\begin{equation}
R_{\mathrm{model}}
= H(\hat{\bm z}) + D_{\mathrm{KL}}(p\|q),
\label{eq:model_rate_main}
\end{equation}
where $D_{\mathrm{KL}}(p\|q)\ge 0$ quantifies the mismatch between the true distribution $p$ and modeled distribution $q$.

\subsubsection{Rate preservation under DS mixing}
Corollary~\ref{cor:entropy_noninc_main} shows that DS mixing does not increase the differential entropy of the continuous features.
Under the standard high-resolution quantization approximation~\cite{gray1998quantization}, the discrete entropy $H(\hat{\bm z})$ is well-approximated by the differential entropy plus a quantization-dependent constant.
This implies that DS-constrained mixing does not incur an ideal-rate overhead.

\subsubsection{Stream-sum preservation}
The DS property $\bone^\top \bH=\bone^\top$ ensures that the column sums of $\bZ$ are invariant under mixing:
\begin{equation}
\bone^\top \widetilde{\bZ} = \bone^\top \bH \bZ = \bone^\top \bZ.
\label{eq:sum_invariance_main}
\end{equation}
This is consistent with the stream reduction $\mathbf{y}=\sum_{s=1}^{S}\mathbf{x}_{L,s}$: the aggregated representation seen by the channel encoder is preserved regardless of how the DS mixing redistributes information across streams.

\subsubsection{Practical rate reduction}
Beyond the ideal-rate guarantee, DS mixing can also reduce the model-based rate $R_{\mathrm{model}}$ in~\eqref{eq:model_rate_main}.
Since each row of $\bH\in\DS_S$ defines a convex combination, the mixed streams tend to exhibit more balanced statistics across dimensions.
This improved regularity can reduce the mismatch $D_{\mathrm{KL}}(p\|q)$ when the EB uses a factorized or weakly-coupled prior, lowering $R_{\mathrm{model}}$ even when $H(\hat{\bm z})$ changes only mildly.

\subsubsection{Resource efficiency summary}
Combining the above rate results with the computational analysis in Sec.~\ref{sec:method_resource}:
(i)~the additional computation and parameters introduced by DS mixing scale as $\mathcal{O}(S/d)$ and $\mathcal{O}(S^2/d^2)$ relative to the standard residual-connection backbone, respectively;
(ii)~stream reduction eliminates multi-stream dimensionality before transmission;
and (iii)~DS mixing does not increase the ideal coding cost (Corollary~\ref{cor:entropy_noninc_main}).
The richer representational capacity provided by multi-stream mixing thus comes at negligible cost in both computation and communication rate.

\subsection{Rate-Distortion Dominance}
\label{sec:theory_rd}

We analyze the achievable rate-distortion (RD) tradeoff by comparing the optimal distortion attainable by the baseline and mHC model classes under the same constraint on the expected EB rate.
Let $D$ denote the expected task distortion and $R$ the expected rate induced by the EB.
The achievable RD function of a model family $\mathcal{F}$ is defined as
\begin{equation}
D^\star_{\mathcal{F}}(R)
\triangleq
\inf_{\substack{f\in\mathcal{F}\\ \mathbb{E}[\mathrm{Rate}]\le R}}
\mathbb{E}\!\left[\ell(\widehat{Y},Y)\right].
\end{equation}

\begin{lemma}[Baseline]
\label{lem:baseline_inclusion_main}
Let $\mathcal{F}_{\mathrm{base}}$ and $\mathcal{F}_{\mathrm{mHC}}$ denote the hypothesis classes of the baseline (standard residual connection) and mHC architectures, respectively.
Since $\bI\in\DS_S$, choosing $\bH^{\mathrm{res}}_l=\bI$ for all layers and assigning the pre/post-mixing weights to one stream makes the mHC layer reduce to the standard residual layer, up to a fixed scaling that can be absorbed by the channel encoder parameters.
Hence $\mathcal{F}_{\mathrm{base}}\subseteq\mathcal{F}_{\mathrm{mHC}}$.
\end{lemma}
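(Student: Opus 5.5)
The plan is to prove the inclusion $\mathcal{F}_{\mathrm{base}}\subseteq\mathcal{F}_{\mathrm{mHC}}$ by construction. Given any baseline encoder with blocks $\mathcal{F}(\cdot;\mathbf{W}_l)$, $l=1,\dots,L$, together with its channel encoder $g_\phi$, EB, decoder and head, I will exhibit an admissible mHC parameter setting whose end-to-end input--output map is identical. Identical maps induce the same distribution of $\hat{\mathbf{z}}$, hence the same expected EB rate and the same expected distortion. So every point feasible for $D^\star_{\mathrm{base}}(R)$ is feasible for $D^\star_{\mathrm{mHC}}(R)$, and the dominance $D^\star_{\mathcal{F}_{\mathrm{mHC}}}(R)\le D^\star_{\mathcal{F}_{\mathrm{base}}}(R)$ follows immediately.

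The first step is to choose the residual-path matrices $\bH^{\mathrm{res}}_l=\bI$ for every $l$. This choice is admissible because $\bI$ satisfies \eqref{eq:ds_constraint}, and it is also reachable by the Sinkhorn projection, at least as a limit of diagonally dominant logits.

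The second step concerns the pre/post weights. The statement suggests putting all weight on one stream, but softmax outputs are strictly positive, so a one-hot vector is not attainable exactly. A one-stream choice also leaves the other streams frozen at the replicated embedding $\mathbf{x}_0$. Stream reduction would then produce an extra input-dependent term, roughly $(S-1)\,\mathrm{LN}(\mathbf{x}_0)$, which is not a fixed scaling. I will therefore instead take uniform weights $\bH^{\mathrm{pre}}_l=\bH^{\mathrm{post}}_l=\tfrac1S\bone$. These are attained exactly by softmax with equal logits, and they are also the paper's stated initialization.

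The third step is an induction on $l$ showing that all $S$ streams stay identical. At layer $0$ they are replicas of $\mathbf{x}$. If every row of $\mathbf{x}_l$ equals a common vector $\mathbf{v}_l$, then the pre-mixing gives $\bH^{\mathrm{pre}}_l\mathbf{x}_l=\mathbf{v}_l$. The update \eqref{eq:hc_update} then gives every stream
\begin{equation*}
\mathbf{v}_{l+1}=\mathbf{v}_l+\tfrac1S\,\mathcal{F}(\mathbf{v}_l;\mathbf{W}'_l).
\end{equation*}
The factor $\tfrac1S$ is absorbed by setting $\mathbf{W}'_l$ equal to $\mathbf{W}_l$ with the final output projections of the attention and MLP sub-blocks multiplied by $S$. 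Both are linear in those weights, so $\tfrac1S\mathcal{F}(\cdot;\mathbf{W}'_l)=\mathcal{F}(\cdot;\mathbf{W}_l)$. This reproduces \eqref{eq:residual} exactly.

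The last step handles the output. LayerNorm acts per stream (per token), so it returns $S$ copies of $\mathrm{LN}(\mathbf{v}_L)$, and the stream sum is $S\,\mathrm{LN}(\mathbf{v}_L)$. This fixed factor $S$ is absorbed by dividing the first linear layer of $g_\phi$ by $S$. Consequently $\mathbf{z}$, and therefore $\bar{\mathbf{z}}$, $\hat{\mathbf{z}}$, $R(\hat{\mathbf{z}})$ and $\hat{\mathbf{s}}$, coincide with the baseline's.

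The main obstacle I expect is the bookkeeping of these absorbed scalings: checking that $\mathcal{F}$ is positively homogeneous in its output projection, and that any normalization inside $\mathcal{F}$ acts on its input rather than its output. A secondary obstacle is that $\bI$ and the limiting softmax choices may lie only on the boundary of the parameterized set. If exact attainability fails there, I would state the inclusion for the closure of $\mathcal{F}_{\mathrm{mHC}}$ and rely on continuity of the rate and distortion in the parameters; the infimum defining $D^\star$ is unaffected by this.
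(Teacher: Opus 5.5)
Your proof is correct, takes a different route from the paper's, and also repairs a flaw in the paper's argument. The paper's proof is only the sentence in the statement: set $\bH^{\mathrm{res}}_l=\bI$ and put all pre/post weight on one stream. That stream then follows $\mathbf{x}_{l+1}=\mathbf{x}_l+\mathcal{F}(\mathbf{x}_l;\mathbf{W}_l)$ with unit weight. The advantage is that nothing about the internal structure of $\mathcal{F}$ is needed. The leftover mismatch is asserted to be a fixed scaling absorbed by $g_\phi$.

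Your objection to this is correct. The other $S-1$ streams stay at the replicated embedding, so the reduced output is $\mathrm{LN}(\mathbf{v}_L)+(S-1)\mathrm{LN}(\mathbf{x})$, where $\mathbf{v}_L$ is the baseline's final state. This is a token-dependent offset, not a scaling, and a linear $g_\phi$ cannot cancel it in general. In addition, one-hot softmax outputs and $\bI$ are only limits of the parameterization.

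Your uniform-weight construction keeps all streams equal, so the discrepancy at the stream sum really is the fixed factor $S$. The price is rescaling each output projection (biases included) by $S$.

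You can also drop the closure fallback. Any row-stochastic matrix maps identical rows to themselves, so $\bH^{\mathrm{res}}_l=\bI$ is not needed. Take $\bH^{\mathrm{res}}_l=\tfrac1S\bone\bone^\top$ instead, which Sinkhorn returns exactly from equal logits; then every mixing weight in your construction is exactly attainable.

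The one assumption you should state explicitly is the structure of $\mathcal{F}$. The identity $\tfrac1S\mathcal{F}(\cdot;\mathbf{W}'_l)=\mathcal{F}(\cdot;\mathbf{W}_l)$ holds in two cases: when $\mathcal{F}$ is a single pre-normalized sub-layer (attention or MLP, as the ``Attn / MLP'' label in the encoder dataflow figure suggests), or when it is a parallel sum of such sub-layers. If $\mathcal{F}$ instead bundled attention and MLP with an inner skip, it fails. Scaling both output projections by $S$ would change the MLP's input from $\mathbf{v}+a$ to $\mathbf{v}+Sa$, and the claimed linearity in the output weights would no longer give the identity.
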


\begin{theorem}[RD dominance]
\label{thm:rd_dominance_main}
Under Lemma~\ref{lem:baseline_inclusion_main},
\begin{equation}
D^\star_{\mathrm{mHC}}(R) \le D^\star_{\mathrm{baseline}}(R),\qquad \forall\, R\ge 0.
\end{equation}
Equivalently, for any target distortion level $D$,
\begin{equation}
R^\star_{\mathrm{mHC}}(D) \le R^\star_{\mathrm{baseline}}(D).
\end{equation}
\end{theorem}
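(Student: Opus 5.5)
The plan is to show that the result is a monotonicity property of an infimum over nested feasible sets, with Lemma~\ref{lem:baseline_inclusion_main} supplying the nesting. First I will make precise what it means for a baseline model to ``be'' an mHC model. For each $f\in\mathcal{F}_{\mathrm{base}}$, Lemma~\ref{lem:baseline_inclusion_main} gives an $f'\in\mathcal{F}_{\mathrm{mHC}}$: take $\bH^{\mathrm{res}}_l=\bI$, concentrate the pre/post weights on one stream, and absorb the fixed scaling into $g_\phi$. I will argue that $f'$ realises the same end-to-end map $\mathbf{x}\mapsto\hat{\mathbf{z}}$, or at least one with the same joint law of $(\hat{\mathbf{z}},\widehat{Y},Y)$. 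It follows that $f'$ has the same expected rate $\mathbb{E}[\mathrm{Rate}]$, since the rate depends only on $\hat{\mathbf{z}}$ and on the entropy model $p_\psi$, which I keep identical. It also has the same expected distortion $\mathbb{E}[\ell(\widehat{Y},Y)]$.

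Given this, fix $R\ge 0$ and define the feasible sets $\mathcal{A}_{\mathrm{base}}(R)=\{f\in\mathcal{F}_{\mathrm{base}}:\mathbb{E}[\mathrm{Rate}]\le R\}$ and $\mathcal{A}_{\mathrm{mHC}}(R)$ analogously. The previous step yields a distortion- and rate-preserving injection from $\mathcal{A}_{\mathrm{base}}(R)$ into $\mathcal{A}_{\mathrm{mHC}}(R)$. An infimum over a larger set is no larger, so $D^\star_{\mathrm{mHC}}(R)\le D^\star_{\mathrm{baseline}}(R)$. If $\mathcal{A}_{\mathrm{base}}(R)$ is empty, the baseline infimum is $+\infty$ and the inequality is trivial.

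For the equivalent rate form, I will define $R^\star_{\mathcal{F}}(D)=\inf\{\mathbb{E}[\mathrm{Rate}]: f\in\mathcal{F},\ \mathbb{E}[\ell]\le D\}$. The same embedding shows that every baseline model feasible at distortion level $D$ has an mHC counterpart with identical rate that is also feasible, so the infimum can only decrease. I prefer this direct argument to inverting the monotone function $D^\star$, because it avoids any continuity or strict-monotonicity issues.

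The main obstacle is the first step. The phrase ``up to a fixed scaling'' needs care: after stream reduction $\mathbf{y}=\sum_s\mathbf{x}_{L,s}$ and the final LayerNorm, putting all pre/post weight on one stream does not obviously give exactly the residual output. The other streams carry the replicated initial embedding, and softmax weights cannot be exactly one-hot. My first attempt is to note that with $\bH^{\mathrm{res}}_l=\bI$ the inactive streams stay equal to $\mathbf{x}_0$. The reduced output is then the active stream plus $(S-1)\mathbf{x}_0$, and I would fold this affine correction into $g_\phi$, treating it as part of the absorbed reparametrisation. For the softmax issue, I would interpret the lemma in the closure sense: one-hot weights are limits of softmax weights. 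The infimum is unchanged provided distortion and rate depend continuously on the parameters, which I would state as a mild assumption.
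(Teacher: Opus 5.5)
Your core argument is the paper's own proof (Appendix~\ref{app:proof_rd_dominance}): the inclusion from Lemma~\ref{lem:baseline_inclusion_main} preserves rate and distortion, and an infimum over a larger feasible set cannot be larger. Proving $R^\star_{\mathrm{mHC}}(D)\le R^\star_{\mathrm{baseline}}(D)$ directly with the same embedding is a small improvement over the paper's one-line appeal to ``monotonicity of the RD function'', because you need no link between $R^\star$ and $D^\star$, such as attainment of infima. Since the theorem is stated \emph{under} the lemma, you may also simply take the inclusion as given, as the paper does.

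The part that fails is your repair of the lemma itself. With $\bH^{\mathrm{res}}_l=\bI$ and one-hot pre/post weights, the reduced output contains an extra term $(S-1)\mathbf{x}_0$. Because the paper applies LayerNorm before the stream sum, that term is actually $(S-1)\mathrm{LN}(\mathbf{x}_0)$. Either way, $\mathbf{x}_0$ is the embedding of the current input token, so this is an input-dependent shift, not a fixed affine map. The encoder $g_\phi$ sees only the sum $\mathbf{y}$ and cannot separate $\mathbf{x}_0$ from the active stream. Hence no choice of $g_\phi$ reproduces the baseline map $\mathbf{x}\mapsto\mathbf{z}$.

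The closure fallback is also weaker than you suggest. A baseline model with rate exactly $R$ may only be approached by mHC models whose rate is slightly above $R$, and these are infeasible. So continuity alone does not give the inequality at $R$.

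A clean construction avoids both problems. Take uniform weights $\bH^{\mathrm{pre}}_l=\bH^{\mathrm{post}}_l=\tfrac1S\bone$; these are exactly attainable by softmax with equal logits, and they are the paper's own initialization.
\begin{itemize}
\item The streams start as replicas, so $\mathbf{x}_l=\bone\mathbf{v}_l^\top$.
\item Every DS matrix satisfies $\bH^{\mathrm{res}}_l\bone=\bone$, so the streams stay identical and evolve as $\mathbf{v}_{l+1}=\mathbf{v}_l+\tfrac1S\mathcal{F}(\mathbf{v}_l;\mathbf{W}_l)$.
\item The factor $\tfrac1S$ is absorbed into the output projection of $\mathcal{F}$.
\item The factor $S$ produced by the stream sum is absorbed into the linear $g_\phi$.
\end{itemize}
This gives an exact embedding. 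It needs no one-hot limits and no continuity assumption. It does not even need $\bH^{\mathrm{res}}_l=\bI$, which the Sinkhorn parametrization reaches only in the limit.
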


\begin{IEEEproof}
See Appendix~\ref{app:proof_rd_dominance}.
\end{IEEEproof}

\textit{Remark:}
Theorem~\ref{thm:rd_dominance_main} is an idealized result that assumes globally optimal solutions over the entire hypothesis class.
In practice, the observed improvements arise from the combined effects of improved optimization stability, more balanced feature statistics for the factorized entropy model, and the non-expansive nature of DS mixing.

\section{Numerical Results}
\label{sec:experiments}

\begin{figure*}[t]
\centering
\includegraphics[width=0.8\textwidth]{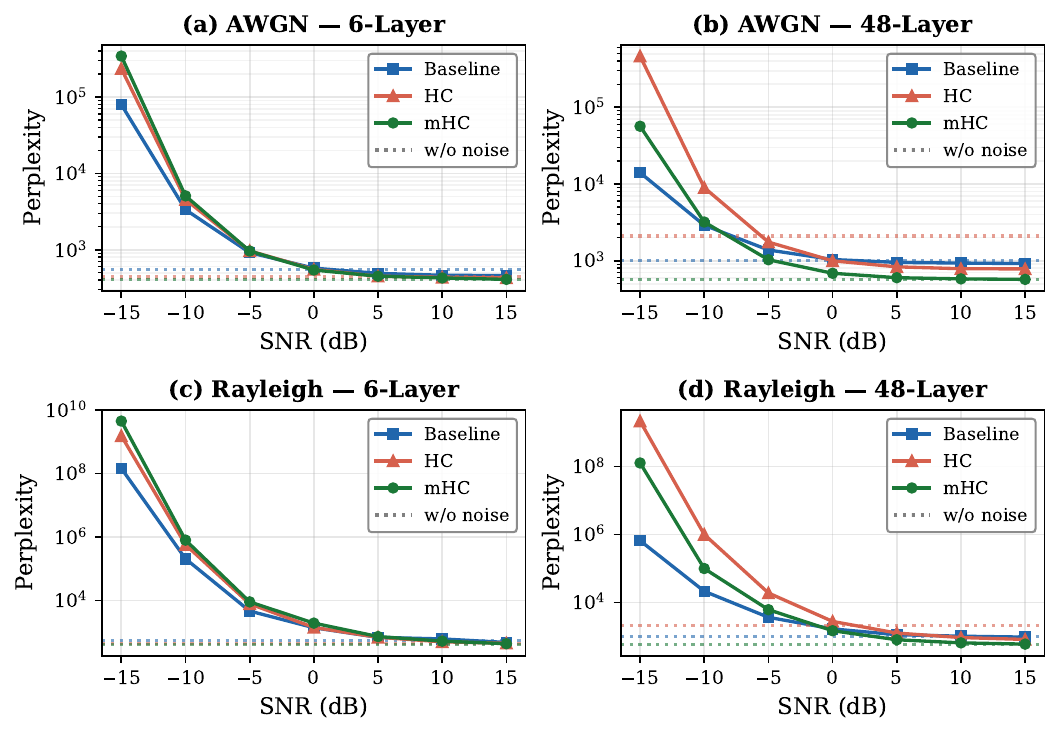}
\caption{FineWeb10B perplexity vs.\ SNR under AWGN and Rayleigh fading for 6-layer and 48-layer models ($y$-axis log-scaled).
All three methods use the entropy bottleneck (EB) with $\lambda{=}0.01$.
mHC achieves the lowest PPL at moderate-to-high SNR ($\ge 0$\,dB), with the advantage most pronounced in the 48-layer setting.}
\label{fig:snr_ppl}
\end{figure*}

\subsection{Experimental Setups}

\subsubsection{Tasks and datasets}
We evaluate mHC coding scheme on both SemCom and TOC settings.
For SemCom, all models are trained on FineWeb10B~\cite{penedo2024fineweb}, a large-scale web text corpus pre-tokenized with  GPT-2 tokenizer (vocabulary size $50{,}304$), and additionally evaluated on WikiText-103~\cite{merity2017wikitext103} for out-of-distribution generalization.
For TOC, the semantic encoder is frozen after SemCom pre-training, and a linear classification head is trained on
AG News~\cite{zhang2015agnews} (4-class topic classification).

\subsubsection{Baselines}
We compare the proposed mHC against four baseline schemes organized into two groups.

\textit{Group~1: Connection-scheme comparison.}
All methods share the same Transformer backbone, EB, channel model, and training procedure; they differ \emph{only} in the inter-layer connectivity: (i)~\emph{Baseline~(EB)}: standard residual connections ($S=1$), (ii)~\emph{HC~(EB)}: standard HC~\cite{zhu2024hyperconnections} with $S=4$ streams and unconstrained mixing matrices, and (iii)~the proposed \emph{mHC~(EB)} with $S=4$ streams and DS-constrained mixing (Sinkhorn-Knopp projection, $10$ iterations, $\tau{=}0.05$).

\textit{Group~2: Channel-coding component comparison.}
To isolate the contribution of the EB-based rate control, we introduce two additional baselines that replace the EB with alternative channel-coding strategies while keeping the same Transformer backbone ($S{=}1$, standard residual connections) and identical training budget:
(iv)~\emph{Dense~(DeepSC)}: a DeepSC-style dense channel coding scheme~\cite{xie2021deepsc} where the channel encoder directly maps semantic features to channel symbols without quantization or rate control; power normalization ensures unit average transmit power;
and (v)~\emph{VIB}: a VIB-based channel coding scheme~\cite{shao2022ib_edge_inference} where the channel encoder outputs mean and log-variance parameters, and the rate is measured as KL divergence between the features posterior and a standard Gaussian prior.
Both baselines use the same bottleneck dimension $k{=}64$ (and thus the same bandwidth ratio $\rho$) and are trained under AWGN uniformly from $\mathcal{U}[5,15]$~dB.
This component-level comparison ensures a fair evaluation. All methods share the same backbone architecture, parameter budget, and the number of channel uses, differing \emph{only} in the channel-coding module (EB vs.\ dense vs.\ VIB).

\subsubsection{Performance metrics}
We report cross-entropy (CE) loss and perplexity ($\mathrm{PPL}=\exp(\mathrm{CE})$) for language modeling, classification accuracy for TOC, and entropy-coded bit rate $R(\hat{\mathbf{z}})=-\sum_i \log_2 p_\psi(\hat{z}_i)$ (bits per sample) for rate evaluation.
The bandwidth ratio is defined as $\rho = k/d$, where $k$ is the channel bottleneck dimension and $d$ is the Transformer hidden size.
We fix $k=64$ for all EB-based methods. Thus, $\rho$ is the same across methods within each depth setting, and equals $64/288\approx 0.22$ for 6-layer models and $64/150\approx 0.43$ for 48-layer models.
All metrics are evaluated across SNR $\in\{-15,-10,-5,0,5,10,15\}$~dB under AWGN, Rayleigh fading, and Rician fading channels, as well as a clean (noise-free) condition.
Additionally, robustness under imperfect CSI is evaluated in Sec.~\ref{sec:exp_channel_robust}.

\subsubsection{Implementation details}
We adopt a GPT-2--style causal Transformer~\cite{radford2019gpt2} following nanoGPT~\cite{karpathy2023nanogpt}, with two scales at ${\approx}\,20$\,M parameters:
a 6-layer model ($d{=}288$, $6$ heads) and a 48-layer model ($d{=}150$, $6$ heads), both with context length $1{,}024$.
The channel encoder compresses the semantic features to dimension $k{=}64$ via a linear projection; the EB uses a factorized Gaussian prior.
Training uses AWGN with SNR sampled uniformly from $\mathcal{U}[5,15]$~dB at each iteration to promote robustness across a range of link qualities; $\lambda{=}0.01$, AdamW~\cite{loshchilov2017decoupled} ($\beta_1{=}0.9$, $\beta_2{=}0.95$, weight decay $0.1$), cosine LR schedule (peak $6{\times}10^{-4}$, $200$-step warm-up), bfloat16, and $5{,}000$ iterations on NVIDIA A40 GPUs.
All models are trained only under AWGN; generalization to Rayleigh, Rician, and imperfect-CSI conditions is evaluated without retraining, providing a stringent test of communication robustness.

\subsection{Result Analysis}

\subsubsection{Connection-scheme comparison}
\label{sec:result_connection}

\begin{table*}[t]
\centering
\caption{WikiText-103 perplexity (out-of-distribution) under AWGN and Rayleigh fading at SNR $\in\{-15,\ldots,15\}$\,dB.
All methods use the entropy bottleneck with $\lambda{=}0.01$.
The ``w/o noise'' column denotes the clean baseline without channel. 6L and 48L denotes 6-layers and 48-layers, respectively. The value is the lower the better. Best result per depth group is \textbf{bolded}.}
\label{tab:wikitext_snr}
\setlength{\tabcolsep}{3.5pt}
\begin{tabular}{l r r r r r r r r}
\toprule
 & \multicolumn{8}{c}{WikiText-103 PPL ($\downarrow$)} \\
\cmidrule(lr){2-9}
 & \multicolumn{7}{c}{SNR (dB)} & \\
\cmidrule(lr){2-8}
Model & $-15$ & $-10$ & $-5$ & $0$ & $5$ & $10$ & $15$ & w/o noise \\
\midrule
\multicolumn{9}{c}{\textit{AWGN}} \\
\midrule
Baseline (EB)-6L  & $1.6{\times}10^5$ & \textbf{10\,110} & \textbf{3\,682} & 2\,620 & 2\,336 & 2\,245 & 2\,233 & 2\,221 \\
HC (EB)-6L        & \bm{$1.0{\times}10^6$} & 20\,270 & 4\,364 & 2\,555 & 2\,132 & 1\,994 & 1\,954 & 1\,969 \\
mHC (EB)-6L       & $3.3{\times}10^6$ & 27\,896 & 4\,378 & \textbf{2\,286} & \textbf{1\,850} & \textbf{1\,737} & \textbf{1\,687} & \textbf{1\,677} \\
\cmidrule(lr){1-9}
Baseline (EB)-48L & \textbf{28\,478} & \textbf{7\,805} & 4\,669 & 3\,947 & 3\,729 & 3\,623 & 3\,639 & 3\,587 \\
HC (EB)-48L       & $3.0{\times}10^6$ & 50\,767 & 9\,062 & 5\,771 & 5\,337 & 5\,235 & 5\,135 & 5\,125 \\
mHC (EB)-48L      & $2.5{\times}10^5$ & 12\,764 & \textbf{4\,139} & \textbf{2\,765} & \textbf{2\,444} & \textbf{2\,316} & \textbf{2\,290} & \textbf{2\,300} \\
\midrule
\multicolumn{9}{c}{\textit{Rayleigh fading}} \\
\midrule
Baseline (EB)-6L  & \bm{$2.9{\times}10^8$} & \bm{$2.8{\times}10^5$} & \textbf{15\,091} & \textbf{4\,690} & 2\,910 & 2\,419 & 2\,294 & 2\,221 \\
HC (EB)-6L        & $1.4{\times}10^{10}$ & $4.0{\times}10^6$ & 34\,653 & 7\,429 & 3\,015 & 2\,291 & 2\,064 & 1\,969 \\
mHC (EB)-6L       & $1.7{\times}10^{11}$ & $1.1{\times}10^8$ & 93\,608 & 8\,238 & \textbf{2\,861} & \textbf{2\,074} & \textbf{1\,785} & \textbf{1\,677} \\
\cmidrule(lr){1-9}
Baseline (EB)-48L & \bm{$1.2{\times}10^6$} & \textbf{49\,824} & \textbf{10\,392} & \textbf{5\,555} & 4\,345 & 3\,797 & 3\,701 & 3\,587 \\
HC (EB)-48L       & $5.0{\times}10^9$ & $4.0{\times}10^6$ & 66\,878 & 15\,367 & 7\,169 & 5\,717 & 5\,555 & 5\,125 \\
mHC (EB)-48L      & $6.6{\times}10^8$ & $4.9{\times}10^5$ & 24\,804 & 6\,194 & \textbf{3\,252} & \textbf{2\,618} & \textbf{2\,371} & \textbf{2\,300} \\
\bottomrule
\end{tabular}
\end{table*}

\paragraph{FineWeb10B perplexity (SemCom)}
Fig.~\ref{fig:snr_ppl} plots perplexity (log-scale) across SNR for the in-distribution FineWeb10B validation set.
Under AWGN channel, mHC-48L achieves the lowest PPL at SNR ${\ge}{-5}$\,dB, e.g., $685$ at $0$\,dB versus $996$ for HC-48L and $1{,}035$ for Baseline-48L ($34\%$ and $34\%$ reduction, respectively).
For the 6-layer group, mHC-6L and HC-6L are closely matched at moderate-to-high SNR (${\ge}0$\,dB), both substantially outperforming the Baseline; at $10$\,dB, mHC-6L reaches $427$ versus $435$ for HC-6L and $463$ for Baseline-6L.
Under Rayleigh fading, mHC-48L leads for SNR ${\ge}0$\,dB; at $10$\,dB it achieves $639$, compared to $917$ for HC-48L and $1{,}005$ for Baseline-48L, yielding $30\%$ and $36\%$ reductions, respectively.

\paragraph{WikiText-103 perplexity (SemCom)}
Table~\ref{tab:wikitext_snr} reports PPL on WikiText-103, an out-of-distribution corpus not seen during training.
At moderate-to-high SNR (${\ge}0$\,dB), mHC achieves the lowest PPL under both channels at both model scales.
Under AWGN at $10$\,dB, mHC-6L attains PPL $1{,}737$, which is $12.9\%$ lower than HC-6L ($1{,}994$) and $22.6\%$ lower than Baseline-6L ($2{,}245$).
For the 48-layer setting, the advantage is even more pronounced: mHC-48L reaches $2{,}316$ versus $5{,}235$ for HC-48L ($55.8\%$ reduction) and $3{,}623$ for Baseline-48L ($36.1\%$ reduction).

\paragraph{AG~News classification accuracy (TOC)}
Fig.~\ref{fig:agnews_snr} shows the 4-class AG~News accuracy across SNR under both channels.
Among the 6-layer variants, HC-6L achieves the highest accuracy, reaching $51.9\%$ (AWGN) and $49.6\%$ (Rayleigh) at $10$\,dB, with mHC-6L following at $47.6\%$/$37.5\%$; both significantly outperform Baseline-6L (${\sim}30\%$), confirming that multi-stream connectivity improves task-relevant feature extraction.
In the 48-layer setting, mHC-48L leads under both channels ($31.5\%$/$31.4\%$ at $10$\,dB), outperforming Baseline-48L ($25.1\%$/$25.5\%$) and HC-48L ($24.0\%$/$25.4\%$).

\begin{figure*}[t]
\centering
\includegraphics[width=0.8\textwidth]{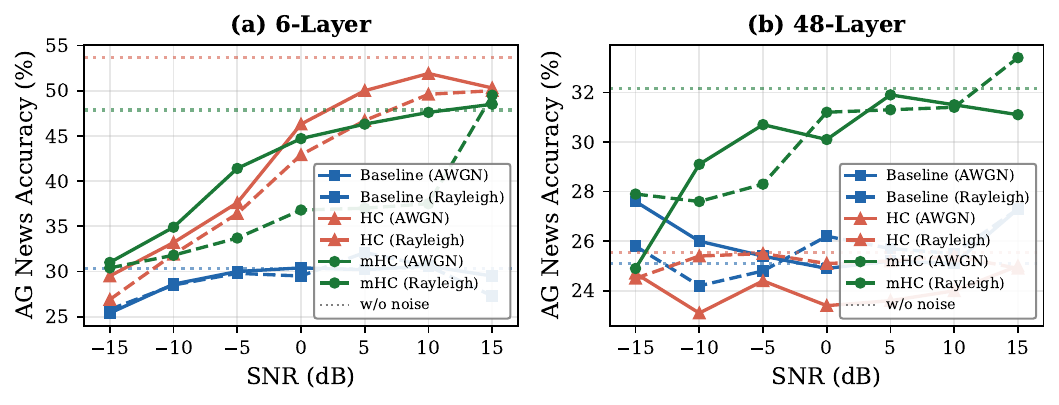}
\caption{AG~News 4-class accuracy vs.\ SNR under AWGN and Rayleigh fading for 6-layer and 48-layer models.
All methods use the entropy bottleneck with $\lambda{=}0.01$.
HC and mHC consistently outperform the single-stream Baseline, with mHC-48L leading at 48 layers.}
\label{fig:agnews_snr}
\end{figure*}

\subsubsection{Channel-coding component comparison}
\label{sec:result_channel_coding}

Table~\ref{tab:channel_coding_comparison} compares the 48-layer mHC~(EB) against Dense~(DeepSC) and VIB under Rayleigh fading channel.
Dense achieves the lowest clean PPL ($322$) because the unquantized encoder introduces no information loss in the absence of noise; its rate is marked ``--'' since it transmits continuous-valued symbols without quantization or entropy coding, making a bit-rate metric undefined.
However, without rate control, Dense degrades catastrophically under fading: at $0$\,dB its PPL explodes to $37{,}014$ ($115{\times}$ clean), whereas mHC~(EB) reaches only $1{,}456$ ($2.6{\times}$ clean).
mHC~(EB) outperforms Dense at all SNR levels below $10$\,dB, and even at $10$\,dB the gap is marginal ($639$ vs.\ $527$, $21\%$).
VIB again exhibits posterior collapse (effective rate $= 0$, PPL ${\sim}2{,}140$ regardless of SNR).
These results demonstrate that EB-based rate control is essential for reliable operation under realistic fading channels, where instantaneous SNR frequently drops below $5$\,dB.

\begin{table}[t]
\centering
\caption{Channel-coding component comparison (48-layer, Rayleigh fading). FineWeb10B PPL at selected SNR levels and effective rate.
mHC~(EB) uses $\lambda{=}0.01$; Dense has no rate control; VIB uses $\lambda{=}0.01$.
Best PPL \emph{at each SNR} is \textbf{bolded}.}
\label{tab:channel_coding_comparison}
\setlength{\tabcolsep}{4pt}
\begin{tabular}{l r r r r r}
\toprule
 & \multicolumn{4}{c}{FineWeb PPL ($\downarrow$)} & \\
\cmidrule(lr){2-5}
Model & Clean & $0$\,dB & $5$\,dB & $10$\,dB & Rate \\
\midrule
Dense (DeepSC)~\cite{xie2021deepsc} & \textbf{322}  & 37\,014         & 1\,478         & \textbf{527} & --   \\
VIB (IB)~\cite{shao2022ib_edge_inference}       & 2\,140        & 2\,150          & 2\,127         & 2\,127       & 0    \\
mHC (EB)       & 562           & \textbf{1\,456} & \textbf{790}   & 639          & 249K \\
\bottomrule
\end{tabular}
\end{table}

\subsubsection{Channel robustness under Rician fading and imperfect CSI}
\label{sec:exp_channel_robust}

To evaluate link-level robustness beyond the AWGN and Rayleigh conditions used above, we test the same AWGN-trained checkpoints (no retraining) under two additional wireless impairments.

\paragraph{Rician fading}
Table~\ref{tab:rician} reports FineWeb10B PPL under Rician fading with $K$-factors $K{\in}\{1,5,10\}$ at representative SNR levels for both model scales.
Recall that all models are trained exclusively under AWGN; these results therefore measure zero-shot generalization to unseen fading conditions.
As $K$ increases (stronger line-of-sight component), PPL approaches the clean reference, confirming the expected physical behavior.
Across all 24 (model, $K$, SNR) combinations, mHC consistently achieves the lowest PPL within each model scale, with an average improvement of $27\%$ over the baseline model in the 6-layer setting and $40\%$ in the 48-layer setting.
The advantage is the most pronounced under severe fading ($K{=}1$, low SNR), where the EB-compressed representation must contend with the largest channel uncertainty.

\begin{table}[t]
\centering
\caption{FineWeb10B PPL under Rician fading (AWGN-trained models, no retraining). $K$ denotes the Rician $K$-factor; $K{=}0$ corresponds to Rayleigh. Lower is better; best per condition is \textbf{bolded}.}
\label{tab:rician}
\setlength{\tabcolsep}{3pt}
\begin{tabular}{l l c r r r r r}
\toprule
 & & & \multicolumn{5}{c}{FineWeb10B PPL ($\downarrow$)} \\
\cmidrule(lr){4-8}
 & & & & \multicolumn{4}{c}{SNR (dB)} \\
\cmidrule(lr){5-8}
Scale & Model & $K$ & Clean & $-5$ & $0$ & $5$ & $10$ \\
\midrule
\multirow{6}{*}{6L}
 & Baseline & 1  & 560  & 621  & 591  & 589  & 581  \\
 & mHC      & 1  & \textbf{399}  & \textbf{475}  & \textbf{452}  & \textbf{438}  & \textbf{434}  \\
\cmidrule(lr){2-8}
 & Baseline & 5  & 560  & 594  & 573  & 573  & 567  \\
 & mHC      & 5  & \textbf{399}  & \textbf{455}  & \textbf{423}  & \textbf{417}  & \textbf{411}  \\
\cmidrule(lr){2-8}
 & Baseline & 10 & 560  & 595  & 568  & 571  & 557  \\
 & mHC      & 10 & \textbf{399}  & \textbf{444}  & \textbf{418}  & \textbf{410}  & \textbf{404}  \\
\midrule
\multirow{6}{*}{48L}
 & Baseline & 1  & 988  & 1051 & 1010 & 1033 & 1022 \\
 & mHC      & 1  & \textbf{567}  & \textbf{633}  & \textbf{611}  & \textbf{608}  & \textbf{612}  \\
\cmidrule(lr){2-8}
 & Baseline & 5  & 988  & 1024 & 1016 & 992  & 995  \\
 & mHC      & 5  & \textbf{567}  & \textbf{614}  & \textbf{589}  & \textbf{592}  & \textbf{576}  \\
\cmidrule(lr){2-8}
 & Baseline & 10 & 988  & 1035 & 999  & 991  & 989  \\
 & mHC      & 10 & \textbf{567}  & \textbf{607}  & \textbf{583}  & \textbf{568}  & \textbf{573}  \\
\bottomrule
\end{tabular}
\end{table}

\paragraph{Imperfect CSI}
We further evaluate robustness under imperfect receiver-side CSI.
The receiver observes an estimated channel coefficient $\hat{h} = h + e$, where $e \sim \mathcal{CN}(0, \sigma_e^2)$, and applies zero-forcing (ZF) equalization.
Table~\ref{tab:imperfect_csi} reports FineWeb10B PPL for the 6-layer models under mild estimation error ($\sigma_e^2{=}0.001$).
mHC achieves lower PPL than the baseline at $-5$, $0$, and $5$\,dB.
For example, at $5$\,dB, mHC achieves PPL~$457$, compared with PPL~$1{,}520$ for the baseline.
These results show that mHC remains robust under mild receiver-side CSI errors.

\begin{table}[t]
\centering
\caption{FineWeb10B PPL under imperfect CSI ($\sigma_e^2{=}0.001$) with Rayleigh fading and ZF equalization (6-layer, AWGN-trained models). Best performance is \textbf{bolded}.}
\label{tab:imperfect_csi}
\setlength{\tabcolsep}{4pt}
\begin{tabular}{l r r r r r}
\toprule
 & \multicolumn{5}{c}{FineWeb10B PPL ($\downarrow$)} \\
\cmidrule(lr){2-6}
 & & \multicolumn{4}{c}{SNR (dB)} \\
\cmidrule(lr){3-6}
Model & Clean & $-5$ & $0$ & $5$ & $10$ \\
\midrule
Baseline (EB) & 552  & 2\,099 & 655  & 1\,520 & \textbf{579}  \\
mHC (EB)      & \textbf{397}  & \textbf{813}   & \textbf{534}  & \textbf{457}   & 697  \\
\bottomrule
\end{tabular}
\end{table}

\subsubsection{Resource Usage}

Table~\ref{tab:resource_detail} reports the parameter count, per-component overhead, and wall-clock training time for all configurations.
The mHC mixing matrices account for only $0.049\%$ of the total parameters in the 6-layer model and $0.200\%$ in the 48-layer model (${\sim}21$K and ${\sim}91$K extra parameters, respectively), confirming the $\mathcal{O}(S^2/d^2)$ scaling predicted in Sec.~\ref{sec:method_resource}.
The channel encoder/decoder adds a fixed $102$K or $33$K parameters regardless of the connectivity scheme. Thus, mHC introduces no extra transmission-side cost.
In terms of FLOPs, the three $S{\times}S$ mixing operations per layer constitute approximately $4.2\%$ of the per-layer Transformer FLOPs for the 6-layer model ($S{=}4$, $d{=}288$) and ${\sim}8.0\%$ for the 48-layer model ($S{=}4$, $d{=}150$), consistent with the $\mathcal{O}(S/d)$ ratio in~\eqref{eq:flops_ratio}.
Notably, mHC-6L converges in $1$h$\,35$m, which is only $59\%$ of the HC-6L training time ($2$h$\,41$m) despite the additional Sinkhorn projection overhead; this speedup is attributable to the DS constraint stabilizing optimization and enabling faster convergence (cf.\ Sec.~\ref{sec:method_train}).
For the 48-layer models, mHC-48L and HC-48L exhibit comparable training times (${\sim}2$h$\,40$m), suggesting that the convergence benefit of the DS constraint is more pronounced at shallow depth.

\begin{table}[t]
\centering
\caption{Resource comparison across connection schemes.
HC Params denotes the number of parameters introduced by the multi-stream mixing matrices.
Channel Params is the combined channel encoder/decoder overhead.
Training time is measured on a single NVIDIA A40 GPU.}
\label{tab:resource_detail}
\setlength{\tabcolsep}{3.5pt}
\begin{tabular}{l
  S[table-format=2.2]
  S[table-format=5.0]
  S[table-format=1.3]
  S[table-format=3.0]
  c}
\toprule
{Model} & {Total (M)} & {HC Params} & {HC (\%)} & {Ch.\ Params (K)} & {Time} \\
\midrule
Baseline-6L  & 43.32 &      0 & 0.000 & 102 & 1h\,03m \\
HC-6L        & 43.34 & 21048 & 0.049 & 102 & 2h\,41m \\
mHC-6L       & 43.34 & 21336 & 0.049 & 102 & 1h\,35m \\
\midrule
Baseline-48L & 45.55 &      0 & 0.000 &  33 & 0h\,39m \\
HC-48L       & 45.66 & 88896 & 0.195 &  33 & 2h\,34m \\
mHC-48L      & 45.66 & 91200 & 0.200 &  33 & 2h\,46m \\
\bottomrule
\end{tabular}
\end{table}

\subsubsection{Training Stability and Convergence}

\begin{figure*}[t]
\centering
\includegraphics[width=0.8\textwidth]{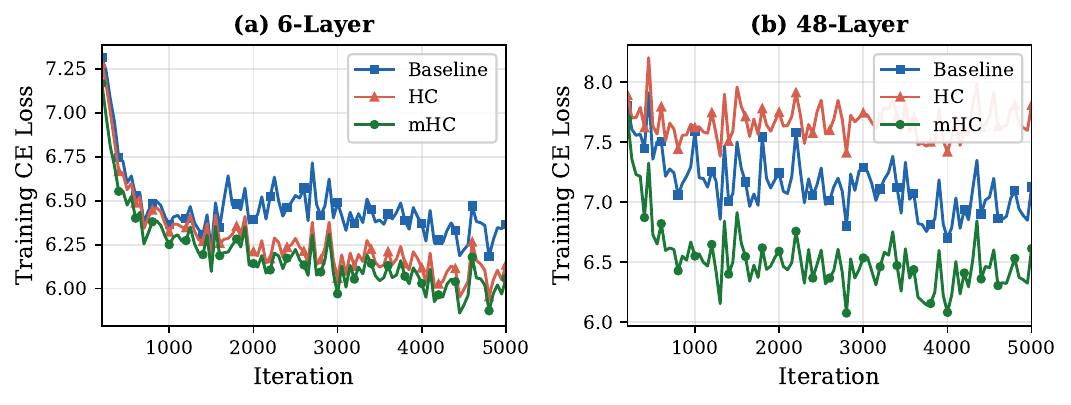}
\caption{Training convergence (CE loss vs.\ iteration) for 6-layer and 48-layer models.
mHC converges to the lowest loss among all methods at both depths, demonstrating the training stability benefit of DS-constrained mixing.}
\label{fig:training_convergence}
\end{figure*}

Fig.~\ref{fig:training_convergence} plots the training CE loss as a function of iterations for both depths.
mHC consistently reaches the lowest loss among all methods: for the 6-layer model, mHC-6L converges below both HC-6L and Baseline-6L throughout training; for the 48-layer model, the gap is even more pronounced, with HC-48L exhibiting visibly slower convergence while mHC-48L converges smoothly to a substantially lower loss.
Combined with the wall-clock times in Table~\ref{tab:resource_detail} (mHC-6L: $1$h$\,$35m vs.\ HC-6L: $2$h$\,$41m, a $41\%$ speedup), these results confirm that the DS constraint stabilizes optimization and accelerates convergence, supporting the analysis in Sec.~\ref{sec:method_resource}.

\subsubsection{Ablation Study}
\label{sec:ablation}

We ablate three key hyperparameters of the mHC coding scheme framework on the mHC-6L model under AWGN ($\mathrm{SNR}{=}10$\,dB).

\paragraph{Bottleneck dimension $k$}
Table~\ref{tab:ablation_combined} (a) reports the effect of varying the bottleneck dimension $k$, i.e., the output dimension of the channel encoder $g_\phi$ that determines the number of transmitted complex symbols.
The PPL shows a non-monotonic trend as $k$ increases.
Increasing $k$ from $16$ to $64$ reduces PPL from $554.2$ to $414.3$, since more channel symbols are available for semantic transmission.
However, when $k$ is further increased to $128$ and $256$, the PPL rises to $433.1$ and $575.5$, respectively.
This suggests that simply increasing the bottleneck dimension does not always improve the rate-distortion tradeoff.
A larger $k$ provides more channel symbols, but it also leads to more feature dimensions to be quantized and learned by the EB.
Also, the estimated bit rate increases almost linearly with $k$, showing that the transmission load grows with the bottleneck dimension.

\paragraph{Channel scale $\sigma_c$}
The channel scale $\sigma_c$ is a learnable value that sets the target standard deviation of the channel encoder output before EB quantization, thereby controlling the effective dynamic range of the quantized features.
Table~\ref{tab:ablation_combined} (b) investigates its initialization value.
A very small scale ($\sigma_c{=}1$) collapses the quantization range, yielding PPL~$2{,}505$.
Performance improves rapidly as $\sigma_c$ increases to $5$--$10$, with the optimum at $\sigma_c{=}10$ (PPL~$382.0$), and degrades slightly at $\sigma_c{=}20$ due to the coarser effective quantization step size.

\paragraph{Rate penalty $\lambda$}
Table~\ref{tab:ablation_combined} (c) sweeps the Lagrange multiplier $\lambda$ from $0.001$ to $0.1$.
As expected, a smaller $\lambda$ leads to less distortion: $\lambda{=}0.001$ achieves the lowest PPL ($266.7$) but provides little rate control.
Larger $\lambda$ encourages compression, increasing PPL while leaving the rate approximately constant (the EB saturates at the same quantization level).
The default $\lambda{=}0.01$ provides a reasonable balance between rate and distortion.

\begin{table*}[t]
\centering
\caption{Ablation studies on mHC-6L under AWGN ($\mathrm{SNR}{=}10$\,dB). Default settings are marked with $\dagger$.}
\label{tab:ablation_combined}
\setlength{\tabcolsep}{4pt}
\begin{minipage}[t]{0.30\textwidth}
\centering
\small
(a) Bottleneck dimension $k$\\[4pt]
\begin{tabular}{S[table-format=3.0] S[table-format=3.1] S[table-format=6.0]}
\toprule
{$k$} & {PPL} & {Rate} \\
\midrule
16  & 554.2  & 62208  \\
32  & 443.9  & 124416 \\
{64$^\dagger$}  & \textbf{414.3}  & 248832 \\
128 & 433.1  & 496783 \\
256 & 575.5  & 991232 \\
\bottomrule
\end{tabular}
\end{minipage}\hfill
\begin{minipage}[t]{0.30\textwidth}
\centering
\small
(b) Channel scale $\sigma_c$\\[4pt]
\begin{tabular}{S[table-format=2.0] S[table-format=4.1] S[table-format=6.0]}
\toprule
{$\sigma_c$} & {PPL} & {Rate} \\
\midrule
1   & 2504.7 & 149504 \\
2   & 452.0  & 151552 \\
{5$^\dagger$}   & 424.6  & 248832 \\
10  & \textbf{382.0}  & 335872 \\
20  & 404.9  & 409600 \\
\bottomrule
\end{tabular}
\end{minipage}\hfill
\begin{minipage}[t]{0.34\textwidth}
\centering
\small
(c) Rate penalty $\lambda$\\[4pt]
\begin{tabular}{S[table-format=1.3] S[table-format=3.1] S[table-format=6.0]}
\toprule
{$\lambda$} & {PPL} & {Rate} \\
\midrule
0.001 & \textbf{266.7}  & 248832 \\
0.005 & 375.6  & 248832 \\
{0.01$^\dagger$}  & 416.6  & 248832 \\
0.02  & 463.6  & 248832 \\
0.05  & 589.4  & 248832 \\
0.1   & 703.2  & 248525 \\
\bottomrule
\end{tabular}
\end{minipage}
\end{table*}
\section{Conclusions}
\label{sec:conclusion}

We propose an \textit{mHC coding scheme} for resource-efficient SemCom and TOC over wireless channels. The proposed scheme combines an mHC encoder with an EB, thereby improving semantic representation learning while enabling explicit entropy-coded rate control. We show that the DS constraint on stream mixing does not increase the ideal coding length. Thus, the multi-stream encoder introduces no additional channel uses after stream reduction. Numerical results under AWGN, Rayleigh fading, Rician fading, and imperfect CSI show the effectiveness of the proposed scheme. Compared with residual and unconstrained HC baselines, mHC achieved better perplexity, out-of-distribution generalization, task accuracy, and convergence stability with negligible parameters and computational overheads. The channel-coding comparison further shows that the EB provides practical RD control, whereas dense coding lacks an explicit rate measure, and VIB-based coding can collapse to a zero-rate solution. These results indicate that combining balanced multi-stream semantic encoding with EB rate control is a promising approach for robust and bandwidth-efficient learned wireless communication.

\printbibliography

\appendices

\section{Proof of Theorem~\ref{thm:lin_entropy_main}}
\label{app:proof_lin_entropy}
\begin{IEEEproof}
Let $\bm z=\vecop(\bZ)\in\RR^{Sd}$ and $\widetilde{\bm z}=\vecop(\widetilde{\bZ})$, where $\vecop(\cdot)$ stacks the columns of its argument.
Since $\widetilde{\bZ}=\bH\bZ$ applies the same $S\times S$ mixing to every column of $\bZ$, the standard Kronecker--vec identity $\vecop(\mathbf{A}\mathbf{B}\mathbf{C})=(\mathbf{C}^\top\!\otimes\mathbf{A})\vecop(\mathbf{B})$ with $\mathbf{A}=\bH$, $\mathbf{B}=\bZ$, $\mathbf{C}=\bI_d$ gives
\begin{equation}
\widetilde{\bm z} = (\bI_d\otimes \bH)\,\bm z.
\end{equation}
For an invertible linear transform $\widetilde{\bm z}=\bA\bm z$, differential entropy satisfies
$h(\widetilde{\bm z}) = h(\bm z) + \log|\det(\bA)|$.
Setting $\bA=\bI_d\otimes\bH$ and applying the mixed-product property of Kronecker determinant yields
\begin{equation}
\det(\bI_d\otimes\bH)=\det(\bI_d)^{S}\det(\bH)^{d}=\det(\bH)^d.
\end{equation}
Therefore,
\begin{equation}
h\!\big(\vecop(\widetilde{\bZ})\big)
= h\!\big(\vecop(\bZ)\big) + d\log|\det(\bH)|.
\end{equation}
\end{IEEEproof}

\section{Proof of Theorem~\ref{thm:det_bound_main}}
\label{app:proof_det_bound}
\begin{IEEEproof}
By Hadamard's inequality,
\begin{equation}
|\det(\bH)| \le \prod_{i=1}^{S} \|\bm r_i\|_2,
\end{equation}
where $\bm r_i^\top$ denotes the $i$-th row of $\bH$.
For $\bH\in\DS_S$, each row is elementwise nonneg\-ative and sums to one. Thus, we have $\|\bm r_i\|_2 \le \|\bm r_i\|_1 = 1$ for every $i$ and $|\det(\bH)|\le 1$.
For the equality case, a permutation matrix $\mathbf{P}\in\DS_S$ has orthonormal rows, giving $\|\bm r_i\|_2=1$ for all $i$ and $|\det(\mathbf{P})|=1$.
Conversely, equality in Hadamard's bound requires each row to satisfy $\|\bm r_i\|_2=\|\bm r_i\|_1=1$, which forces every row to be a standard basis vector; combined with the DS constraints, this implies $\bH$ is a permutation matrix.
\end{IEEEproof}

\section{Proof of Theorem~\ref{thm:rd_dominance_main}}
\label{app:proof_rd_dominance}
\begin{IEEEproof}
By Lemma~\ref{lem:baseline_inclusion_main}, $\mathcal{F}_{\mathrm{base}}\subseteq\mathcal{F}_{\mathrm{mHC}}$.
Consider any constraint $R\ge 0$ on the expected EB rate. 
Every encoder/decoder pair in $\mathcal{F}_{\mathrm{base}}$ whose expected EB rate is no larger than $R$ and whose distortion is $D$ is also contained in $\mathcal{F}_{\mathrm{mHC}}$ at the same rate and distortion.
Since the infimum over a superset cannot exceed the infimum over a subset,
\begin{equation}
D^\star_{\mathrm{mHC}}(R)
= \inf_{\substack{f\in\mathcal{F}_{\mathrm{mHC}}\\ \mathbb{E}[\mathrm{Rate}]\le R}} \mathbb{E}[\ell]
\le \inf_{\substack{f\in\mathcal{F}_{\mathrm{base}}\\ \mathbb{E}[\mathrm{Rate}]\le R}} \mathbb{E}[\ell]
= D^\star_{\mathrm{baseline}}(R).
\end{equation}
The equivalent statement $R^\star_{\mathrm{mHC}}(D) \le R^\star_{\mathrm{baseline}}(D)$ follows from the monotonicity of the RD function.
\end{IEEEproof}


\end{document}